\documentclass[12pt,letterpaper,english]{article}

\usepackage[margin=1in]{geometry}

\usepackage{mdwlist}

% Special symbols, etc.
%\usepackage{amssymb,amsbsy,latexsym}
\usepackage{amsmath}
\usepackage{graphics, subfigure, float, algorithm}
\usepackage[noend]{algpseudocode}
\usepackage{fp, calc}
\usepackage{hyperref}
\usepackage[T1]{fontenc} % IMPORTANT: Before loading fonts
\usepackage{fourier}
\usepackage{bm}
% AMS Math packages

\usepackage{amscd,amsthm}

% Bibliography
%\usepackage[authoryear]{natbib}

%\usepackage{natbib}

%\usepackage{bibgerm}
%\usepackage[numbers]{natbib}

%\usepackage{cite}

% Graphics
\usepackage[dvips]{graphicx,epsfig,color}
\usepackage{pst-plot}
\usepackage{pst-all}
\usepackage{pstricks-add,pst-func}
\newpsobject{showgrid}{psgrid}{subgriddiv=1,griddots=10,gridlabels=6pt}
\usepackage{comment,verbatim}

\newtheoremstyle{theorem}{1em}{1em}{\slshape}{0pt}{\bfseries}{.}{ }{}
\theoremstyle{theorem}
\newtheorem{theorem}{Theorem}
\newtheorem*{theorem*}{Theorem}

\newtheorem{lemma}[theorem]{Lemma}
\newtheorem*{lemma*}{Lemma}

\newtheorem{claim}[theorem]{Claim}

\algdef{SE}[SUBALG]{Indent}{EndIndent}{}{\algorithmicend\ }%
\algtext*{Indent}
\algtext*{EndIndent}

\theoremstyle{remark}

\newtheorem*{remark*}{Remark}

\providecommand{\setR}{\mathbb{R}}

\psset{linewidth=1pt, arrowsize=6pt}
\floatstyle{ruled}
\newfloat{algorithm}{tbp}{loa}
\floatname{algorithm}{Algorithm}

%%%%%%%%%%%%%%%%%%%%%%%%%%%%%% Textclass specific LaTeX commands.
 \theoremstyle{definition}
 
  \theoremstyle{plain}
  
  \theoremstyle{plain}
  
  \theoremstyle{plain}

%\newcounter{exercise}

\theoremstyle{definition}

\newtheorem{openquestion*}{Open Question}
\theoremstyle{theorem}

%\newtheoremstyle{theorem}{1em}{1em}{\slshape}{0pt}{\bfseries}{.}{ }{}
%\theoremstyle{theorem}
%\newtheorem{theorem}{Theorem}
%\newtheorem{question}{Question}

\psset{nodesep=0.8pt}

%% Takes care, that preview-latex shows a preview of this environment
\usepackage[displaymath,textmath,sections,graphics, subfigure, floats]{preview} 
%%\PreviewEnvironment{enumerate} % Attention: Do not use comma-seperated lists here 
%%\PreviewEnvironment{itemize}
%%\PreviewEnvironment{theorem} 
%%\PreviewEnvironment{lemma} 
%\PreviewEnvironment{myproblem}
%%\PreviewEnvironment{algorithm} 
%\PreviewEnvironment{defn} 
\PreviewEnvironment{center}

\definecolor{augpathgray}{gray}{0.5} % 0=black, 1=white

\usepackage{calrsfs} % changes mathcal symbols in math
\DeclareMathAlphabet{\pazocal}{OMS}{zplm}{m}{n}

\makeatother
\date{}
\title{A Tale of Santa Claus, Hypergraphs and Matroids }
\author{Sami Davies\thanks{University of Washington, Seattle. Email: {\tt daviess@uw.edu}} 
\and Thomas Rothvoss\thanks{University of Washington, Seattle. Email: {\tt rothvoss@uw.edu}. 
Supported by NSF CAREER grant 1651861 and a David \& Lucile Packard Foundation Fellowship.} 
\and Yihao Zhang\thanks{University of Washington, Seattle. Email: {\tt yihaoz93@uw.edu}}}
 
\begin{document}

\maketitle

\begin{abstract}
A well-known problem in scheduling and approximation algorithms is the
Santa Claus problem. Suppose that Santa Claus has a set of gifts, and he wants to distribute them among a set of children
so that the least happy child is made as happy as possible. Here, the value that
a child $i$ has for a present $j$ is of the form $p_{ij} \in \{ 0,p_j\}$.
A polynomial time algorithm by Annamalai et al. gives a $12.33$-approximation 
and is based on a modification of Haxell's hypergraph matching argument. 

In this paper, we introduce a \emph{matroid} version of the Santa Claus problem. 
Our algorithm is also based on Haxell's augmenting tree, but with the introduction
of the matroid structure, we solve a more general problem with cleaner methods.
Our result can then be used as a blackbox to obtain a  $(6+\varepsilon)$-approximation for Santa Claus.
This factor also compares against a natural, compact LP for Santa Claus.
\end{abstract}

\section{Introduction}

Formally, the \emph{Santa Claus} problem takes as input a set $M$ of children, a set $J$ of gifts,
and values $p_{ij} \in \{ 0,p_j\}$ for all $i \in M$ and $j \in J$. In other words, a child is
only interested in a particular subset of gifts, but then its value only depends on the
gift itself. The goal is to find an assignment $\sigma : J \to M$ of gifts to children so that
$
  \min_{i \in M} \sum_{j \in \sigma^{-1}(i)} p_{ij}
$
is maximized.

The first major progress on this problem is due to Bansal and Sviridenko~\cite{SantaClaus-BansalSviridenko-STOC2006}, who showed a $O(\log \log n / \log \log \log n)$-approximation based on rounding a \emph{configuration LP}. The authors of \cite{SantaClaus-BansalSviridenko-STOC2006} also realized that in order to obtain a $O(1)$-approximation, it suffices to prove the following combinatorial statement: show that
in a uniform bipartite hypergraph with equal degrees on both sides, there is a left-perfect matching that selects a constant fraction of nodes from the original edges.
This statement was proven by Feige~\cite{ConstantIntegralityGapSantaClaus-Feige-SODA2008} using the Lov{\'a}sz Local Lemma repeatedly, but for a large unspecified constant.
Then Asadpour, Feige and Saberi~\cite{SantaClaus-AsadpourFeigeSaberi-APPROX2008} showed that
one can prove the statement from \cite{SantaClaus-BansalSviridenko-STOC2006} by using a beautiful
theorem on hypergraph matchings due to Haxell~\cite{HypergraphMatchingsHaxell95}; 
their bound\footnote{The conference version of \cite{SantaClaus-AsadpourFeigeSaberi-APPROX2008} proves a factor of 5, which was improved to 4 in the journal version~\cite{SantaClaus-AsadpourFeigeSaberi-Journal-TALG2012}.} of 4 has been slightly improved to 3.84 by Jansen and Rohwedder~\cite{ConfigurationLP-for-SantaClaus-has-gap-atmost3.84-Arxiv2018}, and then to 3.808 by Cheng and Mao~\cite{CM19}.
Recently, Jansen and Rohwedder~\cite{CompactLPforAllocationProblems-JansenRohwedder-SOSA18} also showed (still non-constructively) that it suffices to compare to a linear program with as few as $O(n^3)$ many variables and constraints, in contrast to the exponential size configuration LP.

We provide a few definitions and known results on hypergraphs matchings. 
A \emph{hypergraph} $\pazocal{H} = (X \dot{\cup} W,\pazocal{E})$ is called \emph{bipartite}
if $|e \cap X| = 1$ for all hyperedges $e \in \pazocal{E}$. A \emph{(left-) perfect matching} is a set of disjoint hyperedges $F \subseteq \pazocal{E}$ that cover each node in $X$. In general, finding perfect matchings in bipartite hypergraphs is 
$\mathbf{NP}$-hard, but there is an intriguing sufficient condition: 
\begin{theorem}[Haxell~\cite{HypergraphMatchingsHaxell95}]
Let $\pazocal{H} = (X \dot{\cup} W,\pazocal{E})$ be a bipartite hypergraph with $|e| \leq r$ for all $e \in \pazocal{E}$. Then either $\pazocal{H}$
contains a left-perfect matching, or there is a subset $C \subseteq X$ and a subset $U \subseteq W$ with $|U| \leq (2r-3) \cdot (|C|-1)$ 
so that all hyperedges incident to $C$ intersect $U$.
\end{theorem}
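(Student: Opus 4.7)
My plan is to proceed by contradiction using an alternating augmenting-tree construction, adapting the classical Hall-type argument for bipartite graphs to the hypergraph setting. Suppose $\pazocal{H}$ has no left-perfect matching and fix a matching $M\subseteq\pazocal{E}$ that covers as many $X$-vertices as possible; by hypothesis some $x_0\in X$ is uncovered. I will iteratively grow a tree $T$ of hyperedges rooted at $x_0$ whose layers alternate between \emph{addable} edges (not in $M$) and \emph{matching} edges (in $M$), maintaining two auxiliary sets: $C\subseteq X$, which always consists of $x_0$ together with the $X$-endpoints of the matching edges currently in $T$, and $U\subseteq W$, the union of the $W$-parts of all edges in $T$.

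The iterative step is as follows. Look for an addable edge $e$ incident to some $x\in C$. If $e\cap W\subseteq U$, so that $e$ is entirely ``blocked'' by edges already in $T$, I will show that one can reroute along $T$: tracing from $e$ back to $x_0$ through the alternating structure, swap matching edges out of $M$ for chosen addable edges from $T$, producing a matching strictly larger than $M$ and contradicting the choice of $M$. Otherwise $e$ meets a fresh matching edge $m\in M$ with $X$-endpoint $y\notin C$, and we extend $T$ by adding both $e$ and $m$, enlarge $C$ by $y$, and update $U$. To keep $|U|$ as small as possible, each extension picks $e$ minimizing $|(e\cap W)\setminus U|$. The process terminates, either by producing a contradiction or by reaching a configuration in which every hyperedge incident to $C$ lies entirely inside $U$---which is exactly the desired blocking pair $(C,U)$.

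The bound $|U|\le(2r-3)(|C|-1)$ then falls out of counting. The tree $T$ contains exactly $|C|-1$ matching edges, each contributing at most $r-1$ $W$-vertices, plus a comparable number of addable edges. By construction (since otherwise we would have augmented), each non-root addable edge shares at least one $W$-vertex with a matching edge already in $T$ and so contributes at most $r-2$ new vertices to $U$. Summing the two contributions yields $|U|\le(r-1)(|C|-1)+(r-2)(|C|-1)=(2r-3)(|C|-1)$.

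The main obstacle is justifying the augmentation step: given a fully blocked addable edge $e$, one must actually exhibit a larger matching by swapping a set of matching edges of $M$ for a set of addable edges of $T$ that cover all the vacated $X$-vertices plus $x_0$ and remain pairwise $W$-disjoint. This is delicate because the tree could have been grown arbitrarily, but the right approach is to maintain a structural invariant---typically a priority or depth ordering on tree edges---so that the swap can be carried out layer by layer, each replacement being compatible with the ones already made. I expect this invariant, and the verification that the greedy minimize-$|(e\cap W)\setminus U|$ rule is consistent with it, to be the heart of the argument.
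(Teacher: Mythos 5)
The paper only cites Haxell's theorem; it does not reprove it (though Section~2 adapts the underlying augmenting-tree machinery to a matroid setting). So this is an assessment of your sketch on its own. The overall shape---alternating tree, blocking set $U$, count of $(2r-3)(|C|-1)$---is right, but the case analysis in your iterative step is reversed, and that reversal would make the proof fail. You say: examine an addable edge $e$ from $C$; if $e\cap W\subseteq U$ (i.e., $e$ is blocked by the tree), augment; otherwise extend. The correct logic is the opposite. One searches for an addable edge $e$ incident to $C$ whose $W$-part is \emph{disjoint} from $U$. If such an $e$ also avoids $M$, you augment by a cascading swap; if it hits $M$ outside the tree, you extend the tree with the newly discovered blocking edges; and only when no $e$ disjoint from $U$ exists is $(C,U)$ the witness pair---that failure is exactly what makes $U$ a cover of every edge out of $C$. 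An edge $e$ with $e\cap W\subseteq U$ may be blocked entirely by other \emph{addable} edges, in which case swapping $e$ in conflicts with them and no augmentation exists, so containment in $U$ is not a valid trigger for augmenting. Your ``otherwise'' branch also silently assumes $e$ meets a fresh matching edge whose $X$-endpoint lies outside $C$; without the disjoint-from-$U$ filter, $e$ might hit only already-discovered blockers, or no matching edge at all, and the step stalls.

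The counting also needs one more observation to close. You argue that each \emph{non-root} addable edge contributes at most $r-2$ new vertices to $U$ because it shares a $W$-vertex with a matching edge ``already in $T$,'' and then sum $(r-2)$ over $|C|-1$ addable edges; but the root addable edge from $x_0$ is one of those $|C|-1$, and your justification does not cover it. The fix is to note that \emph{every} addable edge---including the root one---must intersect some blocking matching edge of the final tree (otherwise that addable edge, being disjoint from $M$, would have triggered an augmentation when examined), so each contributes at most $r-2$ vertices beyond $B_W$, giving $|U|\le (r-1)(|C|-1)+(r-2)(|C|-1)=(2r-3)(|C|-1)$. Finally, you explicitly defer the augmentation argument itself, which is where most of the real work in Haxell's proof lies: one needs an invariant guaranteeing that the cascaded replacement edges are pairwise $W$-disjoint and disjoint from the surviving matching. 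As written, the proof is incomplete precisely at its load-bearing step.
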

It is instructive to consider a standard bipartite graph, i.e., when $r=2$.
In this case, if there is no perfect matching, there is a set $C \subseteq X$ with at most 
$|C|-1$ many neighbors---thus Haxell's condition generalizes \emph{Hall's Theorem}. However, unlike Hall's Theorem, 
Haxell's proof is non-constructive and based on a possibly exponential time augmentation argument.
Only very recently (and with a lot of care), Annamalai~\cite{FindingPerfectMatchingsInHypergraphs-Annamalai-SODA16} managed to make the argument polynomial time by introducing some slack into the condition and assuming the parameter $r$ is a constant. Preceding \cite{FindingPerfectMatchingsInHypergraphs-Annamalai-SODA16}, 
Annamalai, Kalaitzis and Svensson~\cite{AlgoForSantaClaus-AnnamalaiKalaitzisSvenssonSODA15}
gave a non-trivially modified version of Haxell's argument for Santa Claus, which runs in polynomial time
and gives a $12.33$-approximation\footnote{To be precise, they obtain a $(6+2\sqrt{10}+\varepsilon)$-approximation in time $n^{O \left(\frac{1}{\varepsilon^2}\cdot \log \left (\frac{1}{\varepsilon} \right)\right )}$.}.
Our work here is concurrent with that of Cheng and Mao, who altered the algorithm of \cite{AlgoForSantaClaus-AnnamalaiKalaitzisSvenssonSODA15} to improve the approximation to 
$6 + \varepsilon$, for any constant $\varepsilon >0$~\cite{ChengM18}.
Our algorithm will also borrow a lot from \cite{AlgoForSantaClaus-AnnamalaiKalaitzisSvenssonSODA15}.
However, through a much cleaner argument, we derive a result for a more general matroid setting, while also obtaining a $6+\varepsilon$ approximation.

It should not go without mention that the version of the Santa Claus problem with arbitrary 
$p_{ij}$ has been studied under the name \emph{Max-Min Fair Allocation}. Although the integrality gap of the 
configuration LP is at least $\Omega(\sqrt{n})$~\cite{SantaClaus-BansalSviridenko-STOC2006}, 
Chakrabarty, Chuzhoy and Khanna~\cite{MaxMinFairAllocation-ChakrabartyChuzhoyKhannaFOCS09}
found a (rather complicated) $O(\log^{10}(n))$-approximation algorithm in  $n^{O(\log n)}$ time\footnote{The
approximation factor is $n^{\varepsilon}$ if only polynomial time is allowed, where $\varepsilon > 0$ is arbitrary but fixed.}.

Santa Claus has a very well studied ``dual'' minmax problem. 
Usually, it is called \emph{Makespan Scheduling} with machines $i \in M$ 
and jobs $j \in J$. Then $p_{ij}$ is the running time of job $j$ on machine $i$, and the goal 
is to assign jobs to machines so the maximum load of any machine is minimized.
In this general setting, the seminal algorithm of Lenstra, Shmoys and Tardos~\cite{SchedulingUnrelatedParallelMachines-LenstraShmoysTardos-FOCS87} gives a 2-approximation, with no
further improvement since then. A $(1.5-\varepsilon)$-approximation is $\mathbf{NP}$-hard~\cite{SchedulingUnrelatedParallelMachines-LenstraShmoysTardos-FOCS87},
and the configuration LP has an integrality gap of 2~\cite{LowerBoundOnGapOfConfigurationLP-VerschaeWiese-ESA11}. In the restricted assignment setting
with $p_{ij} \in \{ p_j,\infty\}$, the breakthrough of Svensson~\cite{MakespanScheduling-Svensson-STOC11} provides a non-constructive $1.942$-bound on the integrality gap of the
configuration LP using a custom-tailored Haxell-type search method.
Recently, this was improved by Jansen and Rohwedder~\cite{ConfigurationLPforRestrictedAssignment-JansenRohwedderSODA17} to $1.834$. In an even more restricted variant called \emph{Graph Balancing}, each job is admissable on exactly 2 machines. In this setting,  Ebenlendr, Krc{\'{a}}l and Sgall~\cite{GraphBalancingEbenlendrKS-SODA08}
gave a 1.75-approximation based on an LP-rounding approach, which has again been improved by Jansen
and Rohwedder~\cite{GraphBalancing1.749-apx-with-local-search-Arxiv2018} to 1.749 by using a local search argument.

\subsection{Our contributions}\label{sec: contributions}

Before we state our results, we provide some relevant definitions on matroids.
Let $\pazocal{M} = (X,\pazocal{I})$ be a \emph{matroid} with \emph{groundset} $X$ and 
a family of \emph{independent sets} $\pazocal{I} \subseteq 2^X$. Recall that 
a matroid is characterized by three properties: 
\begin{enumerate*}
\item[(i)] \emph{Non-emptyness}: $\emptyset \in \pazocal{I}$; 
\item[(ii)] \emph{Monotonicity}: For $Y \in \pazocal{I}$ and $Z \subseteq Y$, we have $Z \in \pazocal{I}$; 
\item[(iii)] \emph{Exchange property}: For all $Y,Z \in \pazocal{I}$ with $|Y| < |Z|$, there is an element $z \in Z \setminus Y$ so that $Y \cup \{ z\} \in \pazocal{I}$.
\end{enumerate*}
The \emph{bases} $\pazocal{B}(\pazocal{M})$ of the matroid $\pazocal{M} $ are all inclusion-wise maximal independent sets. The cardinalities of 
all bases are identical, with size denoted as $\textrm{rank}(\pazocal{M})$.
The convex hull of all bases is called the \emph{base polytope}, that is $P_{\pazocal{B}(\pazocal{M})} := \textrm{conv}\{ \chi(S) \in \{0,1\}^X \mid S\textrm{ is basis}\}$,
where $\chi(S)$ is the \emph{characteristic vector} of $S$.

Now consider a bipartite graph $G = (X \dot{\cup} W,E)$, where  $X$ is the ground set and $W$ is
a set of \emph{resources}; each resource  $w \in W$ has
a \emph{value}  $p_w \geq 0$. 
We define a problem called \emph{Matroid Max-Min Allocation}, where 
the goal is to find a basis $S \in \pazocal{B}(\pazocal{M})$, and an 
assignment of resources to that basis, $\sigma : W \to S$ with $(\sigma(w),w) \in E$, so that $\min_{i \in S} \sum_{w \in \sigma^{-1}(i)} p_w$ is maximized. 
To the best of our knowledge, this problem has not been studied before.
If $T \geq 0$ is the target objective value, 
we can define a linear programming relaxation $Q(T)$ as the set of vectors $(x,y) \in \setR_{\geq 0}^X \times \setR_{\geq 0}^E$
satisfying the constraints
\begin{eqnarray}
 x \in P_{\pazocal{B}(\pazocal{M})}; \quad 
 \sum_{w \in N(i)} p_w \cdot y_{iw} \geq T \cdot x_i \;, \forall i \in X; \quad  \sum_{i : (i,w) \in E} y_{iw} \leq 1 \;, \forall w \in W; \quad y_{iw} \leq x_i, \;  \forall (i,w) \in E. \label{eqn: lp}
\end{eqnarray} 

The decision variable $x_i$ indicates whether element $i$ should be part of the basis, and
$y_{iw}$ indicates whether resource $w$ should be assigned to element $i$. We often abbreviate 
$N(i) = \{w \in W \mid (i,w) \in E\}$ as the neighborhood of $i$.

Our main technical result is the following theorem. 
\begin{theorem} \label{thm:MainMatroidAlgorithm}
Suppose $Q(T) \neq \emptyset$. Then for any $\varepsilon>0$ one can find
 \[
   (x,y) \in Q \left (
    \left ( \frac15 - \varepsilon \right ) \cdot T - \frac15 \cdot \max_{w \in W} p_w \right )
 \] 
with both $x$ and $y$ integral in time $n^{O_{\varepsilon}(1)}$, where $n := |X| + |W|$.
This assumes that membership in the matroid can be tested in time polynomial in $n$. 
\end{theorem}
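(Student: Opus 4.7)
The plan is to adapt the Haxell-type augmenting tree argument of Annamalai, Kalaitzis and Svensson to the matroid setting, using the matroid exchange property as a replacement for the cardinality comparisons that drive the standard bipartite argument.

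First, I would preprocess the LP solution. Given $(x^*,y^*) \in Q(T)$, classify the resources into \emph{big} resources (roughly those with $p_w \geq \alpha T$ for a suitable threshold $\alpha \approx 1/3$) and \emph{small} resources. For each ground element $i \in X$, the LP certifies that its fractional load can be covered either essentially by a single big resource or by a fractional bundle of small resources of total value about $T$. Using standard configuration-style decomposition I would convert this into a bipartite \emph{configuration hypergraph} $\pazocal{H}$: the ground elements of the matroid form one side, and each hyperedge at $i$ is either a single big resource in $N(i)$ or a bundle of small resources in $N(i)$ whose total size exceeds the reduced target $T' := (\tfrac13-\varepsilon)T - \tfrac13 \max_w p_w$. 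A greedy argument on small bundles explains the constant $\tfrac13$ and the $-\tfrac13 \max_w p_w$ correction, because we can stop as soon as a partial bundle first exceeds $T'$, and we must account for one last item of size at most $\max_w p_w$. The problem then becomes: find a matching that selects one hyperedge per element of some matroid basis, with no resource used twice.

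Next, I would run a matroid-Haxell augmentation. Start with any independent set $S \in \pazocal{I}$ together with an injective assignment of hyperedges to $S$; try to augment $S$ by one element while keeping an independent set. If the naive greedy step fails, grow an \emph{augmenting tree} of \emph{addable} candidate hyperedges for the uncovered element, together with the \emph{blocking} resources they share with currently used hyperedges. Whenever an addable hyperedge is found that is resource-disjoint from the current assignment, we swap along the tree path; the matroid exchange property is used to guarantee that after the swap the resulting set is still independent (and can be extended to a larger independent set via exchange). Crucially, because the ground side now carries a matroid rather than just a set, the role previously played by the bipartite side $X$ of the hypergraph is taken by the \emph{matroid contraction} $\pazocal{M}/S'$ along the current tentative basis $S'$; addability is defined relative to this contraction, and the ranks in the exchange argument provide the analogue of the $|C|-1$ bound in Haxell's condition.

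To make the augmentation polynomial, I would follow the Annamalai-Kalaitzis-Svensson template of building the tree in exponentially growing layers, collapsing the collection of blocking resources into a laminar family, and requiring each newly added hyperedge to \emph{avoid} a $(1+\beta)$-fraction of previously blocked resources. The LP feasibility of $Q(T)$ is invoked to certify that, as long as the tree has not yet closed, there is a polynomial-factor surplus of candidate hyperedges for \emph{some} element in the current frontier; this surplus is precisely what is lost in passing from $T$ to $T'$. A standard potential-function argument on the laminar family, essentially unchanged from the hypergraph case, shows that after $n^{\Theta_\varepsilon(1)}$ additions a resource-disjoint addable hyperedge must be produced, so the tree closes and an augmenting swap is performed; repeating this $\textrm{rank}(\pazocal{M})$ times yields a full basis with a valid integral assignment $(x,y) \in Q(T')$.

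The main obstacle I expect is the interaction between the matroid exchange property and the blocking/addability machinery: in the hypergraph proof one freely removes a blocking edge and replaces it by an addable one, whereas here we must verify that after such a swap the new set of selected ground elements remains independent and that the corresponding tree invariants (laminarity, growth rate, polynomial surplus from the LP) are preserved. Handling this cleanly, without losing more than the $\tfrac13$ factor, is where the matroid framing actually pays off, since it lets us phrase the entire argument in terms of rank in the contraction rather than ad hoc cardinality counts.
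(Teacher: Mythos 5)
Your high-level plan is the right one — build a Haxell-type augmenting tree over minimal hyperedges, use matroid exchange in place of cardinality counting, and control the size of the tree via a layer-collapse potential argument in the style of Annamalai--Kalaitzis--Svensson. But there are two genuine gaps in your sketch.

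First, your explanation of where the factor $\frac13$ and the $-\frac13\max_w p_w$ correction come from is wrong. You attribute it to a greedy bundle that overshoots by at most one item; that only explains why a ``$\nu$-edge'' has value in $[\nu T, (\nu+\delta)T)$ with $\delta = \max_w p_w/T$, not why $\nu$ must be near $\tfrac{1-\delta}{3}$. The actual constraint comes from the expansion/covering step: when the greedy selection of disjoint $\alpha$-edges is stuck, one double-counts the LP mass $\sum_{i\in U} x_i \cdot (1-\alpha)T$ against the total value of the already-consumed resource set $W''$, which is bounded (using minimality of add and blocking edges) by roughly $(\mu(\alpha+\delta)+\alpha+\beta+\delta)\,T\,|C|$. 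Nonvacuousness of the resulting inequality forces $2\alpha + \beta + \delta < 1$, and with $\alpha\approx\beta$ this is exactly what pins both down to $\tfrac{1-\delta}{3}$. Without this accounting you have no lever to extract the claimed constant. Related to this, you also miss that the whole point of the paper's parameter choice is that the add edges can be taken only \emph{marginally} larger than the blocking edges ($\alpha=\beta+\varepsilon/2$ rather than a large constant factor apart, as in AKS); requiring ``each newly added hyperedge to avoid a $(1+\beta)$-fraction of previously blocked resources'' is not the right invariant and, if carried through, reproduces the much weaker AKS-style constant rather than $\tfrac13$.

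Second, you hand-wave the interaction between the LP and the matroid. The difficulty is not only that swaps must preserve independence (that part is indeed handled by exchange), but that the expansion step must produce $\Omega_\varepsilon(|C|)$ \emph{new} candidate hyperedges whose $X$-endpoints $D$ can all simultaneously be swapped for $C$ while remaining independent. This requires combining the constraint $x\in P_{\pazocal{B}(\pazocal{M})}$ with the Exchange Lemma to show that the swap-compatible set $U$ carries LP mass at least $|C|-|D|$ (the ``Strong Swapping Lemma'' in the paper). Reframing via the contraction $\pazocal{M}/S'$ is a reasonable intuition, but it does not by itself deliver the quantitative bound; you still need the base-polytope integrality together with the exchange-graph matching to certify the mass lower bound that the stuck-greedy contradiction relies on.
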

Previously this result was not even known with non-constructive methods. 
We see that Matroid Max-Min Allocation is a useful framework by applying it to the Santa Claus problem:

\begin{theorem}
The Santa Claus problem admits a $(6+\varepsilon)$-approximation  algorithm in time
$n^{O_{\varepsilon}(1)}$. 
\end{theorem}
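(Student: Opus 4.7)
The plan is to reduce Santa Claus to a Matroid Max-Min Allocation instance whose target $T$ equals the Santa Claus optimum and whose resource sizes satisfy $\max_w p_w \leq T/4$; Theorem~\ref{thm:MainMatroidAlgorithm} will then give integral $(x,y) \in Q((1/3-\varepsilon)T - (T/4)/3) = Q(T/4 - \varepsilon T)$, which directly yields a $(4+\varepsilon)$-approximation. First I would guess $T$ by binary search (up to factor $(1+\varepsilon)$) and call gift $j$ \emph{big} if $p_j \geq T/4$ and \emph{small} otherwise.

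I would take ground set $X = \{x_i^S : i \in M\} \cup \{x_{i,j} : i \in M,\, j \in J_B,\, p_{ij} \geq T/4\}$, where $x_i^S$ encodes ``child $i$ is satisfied by small gifts'' and $x_{i,j}$ encodes ``child $i$ is satisfied by big gift $j$''. The matroid $\pazocal{M}$ is designed so that its bases pick exactly one element per child and use each big gift at most once across the chosen $x_{i,j}$'s. The resources $W$ contain one resource of size $p_j$ for each small gift, and for each big gift $j$ four resources of size $T/4$ connected only to $\{x_{i,j}\}_i$; this keeps $\max_w p_w \leq T/4$. Edges: $x_i^S$ to each small gift liked by $i$; $x_{i,j}$ to each of $j$'s four resources. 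Target $T' = T$. To argue feasibility of $Q(T)$, I would use an integer Santa Claus optimum $\sigma^*$: for each child with a big gift $j^* \in \sigma^{-1}(i)$, set $x_{i,j^*}=1$ with $y=1$ on all four resources of $j^*$ (delivering value $T$); for the rest, set $x_i^S = 1$ and lift $\sigma^*$'s small-gift assignment at $i$ (which already delivers $\geq T$ in small gifts alone, since big gifts have been handled). Applying the theorem then gives integral $(x,y) \in Q(T/4 - \varepsilon T)$. Decoding is direct: $x_i^S$ in the basis gives $i$ small gifts worth $\geq T/4 - \varepsilon T$, and $x_{i,j}$ in the basis gives $i$ the actual big gift $j$ worth $\geq T/4$. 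The matroid's ``one child per big gift'' property makes the decoding consistent, so every child ends up with value $\geq T/4 - \varepsilon T$, which after reparameterizing $\varepsilon$ is the claimed $(4+\varepsilon)$-approximation.

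The hard part will be constructing $\pazocal{M}$ as a \emph{single} matroid. The two natural constraints (``one mode per child'' and ``one child per big gift'') are each partition matroids but on incompatible partitions of $X$, so their intersection is not a matroid in general---the exchange axiom fails in simple examples like two children and two big gifts that they both accept. The cleanest route I can see is to encode $\pazocal{M}$ as a transversal matroid of an auxiliary bipartite graph whose right side has one slot per child and one slot per big gift, with a carefully chosen connection pattern from $X$ so that matchable subsets coincide with the intended bases; one may also need to massage the bipartite graph so that $\{x_i^S, x_{i,j}\}$ pairs on the same child $i$ are forced to compete for a common slot. Verifying that this transversal matroid gives exactly the right combinatorial structure, and that feasibility at $T' = T$ survives the four-copy representation of big gifts, will be the bulk of the technical work and where the paper's cleaner matroid viewpoint pays off.
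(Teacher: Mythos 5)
You have correctly identified the fatal gap in your own proposal: the family of sets you want as bases---one ``mode'' per child, injective on large gifts---is \emph{not} a matroid. Your counterexample is exactly right: with two children both accepting large gifts $j_1, j_2$, take $Y = \{x_1^S, x_{2,j_1}\}$ and $Z = \{x_{1,j_1}, x_{2,j_2}\}$; both are intended bases, yet no $z \in Z \setminus Y$ makes $Y \setminus \{x_1^S\} \cup \{z\}$ a basis, so basis exchange fails. The transversal-matroid rescue you float cannot work either. For $\{x_i^S, x_{i,j}\}$ to be dependent in a transversal matroid, Hall's condition forces those two elements to have the \emph{same singleton} neighborhood $\{s_i\}$; for $\{x_{i,j}, x_{i',j}\}$ to be dependent, they too must share a singleton neighborhood. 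Chaining these for two children $i \neq i'$ both accepting gift $j$ gives $N(x_i^S) = N(x_{i,j}) = N(x_{i',j}) = N(x_{i'}^S)$, a single slot, which makes $\{x_i^S, x_{i'}^S\}$ dependent---wrong. So no transversal matroid realizes the intended structure, and more generally the intersection of the two partition matroids is simply not a matroid.

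The paper avoids this entirely by not encoding the large-gift choice into the matroid. The matroid groundset is just the children $M$, and the matroid is the co-matroid $\pazocal{M}^*$ of the matchable set matroid $\pazocal{M}$, where $M_L \in \pazocal{I}$ iff $M_L$ can be perfectly matched into the large gifts. A basis $M_S$ of $\pazocal{M}^*$ is exactly a set of children whose complement $M_L = M \setminus M_S$ is a basis of $\pazocal{M}$; the actual gift-to-child matching for $M_L$ is then computed by an ordinary bipartite-matching step \emph{after} Theorem~\ref{thm:MainMatroidAlgorithm} returns. The resource set $W$ in the Matroid Max-Min instance consists of the small gifts only, so the ``four copies of $T/4$'' gadget is unnecessary and $\max_w p_w \leq \delta_1 T$ is automatic. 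With $\delta_1 = \delta_2 = 1/4$, the children in $M_S$ receive $\geq (1/3 - \delta_1/3 - \varepsilon)T = (1/4 - \varepsilon)T$ in small gifts and the children in $M_L$ each receive one large gift worth $\geq \delta_2 T = T/4$. One further step the paper needs, which is orthogonal to your difficulty: they certify $Q(T) \neq \emptyset$ from the compact LP $P(T,\delta_1,\delta_2)$ rather than from an integer optimum, and the natural $x$ obtained from the LP lies in the matroid polytope but not necessarily the base polytope, so Lemma~\ref{lem: RoundToBasePolytope} is invoked to lift it coordinate-wise into $P_{\pazocal{B}(\pazocal{M})}$ before complementing. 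Your binary-search-for-$T$ route would sidestep that rounding, but also forgoes the paper's secondary claim that a compact $O(n^2)$-size LP has integrality gap $4+\varepsilon$.
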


Fix an instance of the Santa Claus problem. Let $OPT$ denote the optimal value of this instance. 
For a suitable threshold $0<\delta<1$, call a gift $j$ \emph{small} 
if $p_j \leq \delta \cdot OPT$ and \emph{large} otherwise. Then the family of sets of children that can get 
assigned large gifts forms a \emph{matchable set matroid}. We apply 
Theorem~\ref{thm:MainMatroidAlgorithm} to the \emph{co-matroid} of the matchable set matroid,
thus obtaining a basis $\{ i \in M \mid x_i=1\}$, 
which contains the children \emph{not} receiving a large gift. 
These children can receive small gifts of total value $( \frac{1}{5}- \frac{\delta}{5} -\varepsilon) \cdot OPT$, and 
the remaining children can receive a large gift with value at least $\delta \cdot OPT$.
Setting $\delta = \frac{1}{6}$ implies the claim. 
Note the approximation factor $6+\varepsilon$ is with respect to the natural, 
compact linear program in LP (\ref{eqn: lp}), which has $O(n^2)$ many variables and constraints. 
The smallest LP that was previously known to have a constant integrality gap was the $O(n^3)$-size LP of \cite{CompactLPforAllocationProblems-JansenRohwedder-SOSA18}.

\section{An algorithm for Matroid Max-Min Allocation}

In this section we provide an algorithm (see Algorithm \ref{fig:Algorithm}) that proves 
Theorem~\ref{thm:MainMatroidAlgorithm}. 

\subsection{Intuition for the algorithm}

We begin with an informal overview of our algorithm, and the formal description is in Algorithm \ref{fig:Algorithm}.
Let $G = (X \cup W,E)$ be the bipartite graph defined in Section~\ref{sec: contributions}, where we recall $X$ is the ground set of a matroid $\pazocal{M} = (X,\pazocal{I})$, and $W$ is a set of resources.
If an element of the ground set $i \in X$ has an edge $(i,j) \in E$ to to every resource $j \in U \subseteq W$,
we can consider the pair $(i,U)$ to be a hyperedge. Hyperedge $(i,U)$ \emph{covers}  $i \in X$.
For $0<\nu <1$ and val$(\cdot)$ the function summing the values in a hyperedge's resources,
we say $(i,U)$ is a \emph{$\nu$-edge} if it's a hyperedge with
minimal (inclusion-wise) resources with val$(U) = \sum_{w \in U} p_w \geq \nu \cdot  T$.
We let $\pazocal{E}_{\nu T}$ denote the set of $\nu$-edges.

Fix constants $0< \beta<\alpha<1$ and $0<\delta < 1$, to be chosen later.
The goal of the algorithm is to find a basis $S \in \pazocal{B}(\pazocal{M})$
and a hypergraph matching $M \subseteq \pazocal{E}_{\beta T}$ covering $S$. 
The algorithm is initialized with $S = \{i_0\}$, for any node $i_0 \in X$,
and $M = \emptyset$. 
We perform rank($\pazocal{M}$) many phases, 
where in each phase we find a larger matching,
where the set it covers in $X$ is independent with respect to the matroid.
In an intermediate phase, we begin with
$S \in \pazocal{I}$ and $M \subseteq \pazocal{E}_{\beta T}$
a hypergraph matching covering $S \setminus \{ i_0\}$, with one exposed node $i_0 \in X$. 
At the end of a phase, the algorithm produces an updated matching covering an independent set $S'$, with $|S'|=|S|$.
For $|S'| < \textrm{rank}(\pazocal{M})$, there exists $i_0' \in X \setminus S'$ such that $S' \cup \{i_0'\} \in \pazocal{I}$.
 Repeating this $\text{rank}(\pazocal{M})$ times, we end with a basis that is covered by $\beta$-edges.

Algorithm \ref{fig:Algorithm} generalizes the notion of an augmenting path used 
to find a maximum matchings in bipartite graphs to an \textit{augmenting tree}.
Instead of swapping every other edge in an augmenting path, as is the case for a bipartite graph, 
the algorithm swaps sets of edges in the augmenting tree to find more space in the hypergraph.
During a phase, the edges are swapped in such a way that the underlying set in $X$ covered by the matching 
is always in $\pazocal{I}$.

Edges that are candidates for being
swapped into the matching are called \textit{adding edges} (or add edges) and denoted by $A$, while those 
that are candidates for being swapped out of the matching are called \textit{blocking edges} 
and denoted by $B$. 
For hyperedges $H \subseteq \pazocal{E}_{\nu T}$ we define $H_X$ and $H_W$
as the nodes covered by $H$ in $X$ and $W$, respectively.
The parameters $\alpha$ and $\beta $ determine the value of the adding and blocking edges, respectively,
so the adding edges are a subset of $\pazocal{E}_{\alpha T}$,
while the blocking edges are a subset of $\pazocal{E}_{\beta T}$.
The algorithm introduces some slack by allowing the adding edges to contain roughly twice as many resources as the blocking edges.

Set $\delta := \max_w p_w / T$, so that
all elements in the basis receive resources with value at most $\delta \cdot T $.
The following observations follow from the minimality of the hyperedges: 
\begin{enumerate}
\item A $\nu$-edge has value less than $(\nu + \delta) \cdot T$. 
This implies an add edge has value less than $ ( \alpha + \delta ) \cdot T$, and a blocking edge has value less than $ (\beta + \delta) \cdot T$.
\item Every blocking edge has value at most $\beta \cdot T$ not covered by an add edge.
\end{enumerate}

To build the augmenting tree,
 the algorithm starts from the node in $S$ uncovered by $M$, $i_0$,
  and chooses an edge $e \in \pazocal{E}_{\alpha T}$ covering $i_0$, which is then added to $A$.
If there is a large enough hyperedge $e' \in \pazocal{E}_{\beta T}$ such that $e' \subseteq e$ and $e'$ is disjoint from $M$,
then there is enough available resources that we simply update $M$ by adding $e'$ to it.
Otherwise, $e$ does not contain a set of resources with total value $\beta \cdot T$ free from $M$. 
The edges of $M$ intersecting $e$ are added to the set of blocking edges, $B$. 
Nodes in $C=\{i_0\} \cup B_X$ are called \textit{discovered} nodes,
 as they are the nodes covered by the hypermatching $M$ that appear in the augmenting tree. 

Continuing to build the augmenting tree in later iterations, 
the algorithm uses an \textit{Expansion Lemma} to find a large set of disjoint hyperedges, 
$H \subseteq \pazocal{E}_{\alpha T}$, that cover a subset which can be swapped into $S$ in place of some subset of 
$C$, while maintaining independence in the matroid. 
The set of hyperedges $H$ either $(i)$ intersects many edges of $M$ or 
$(ii)$ has a constant fraction of edges containing a hyperedge from $\pazocal{E}_{\beta T}$ that is disjoint from $M$.

In the first case, a subset of $H$ which intersects $M$, denoted $A_{\ell+1}$, 
is added to $A$,
and the edges of $M$ intersecting $A_{\ell+1}$, denoted $B_{\ell+1}$, are added to $B$, 
for $\ell$ the index of the iteration. 
Note we naturally obtain \textit{layers} which partition the adding and blocking edges in our augmenting tree. 
The layers for the adding and blocking edges, respectively, are denoted as $A_{\ell}$ and $B_{\ell}$, and we let 
\[
A_{\leq \ell} = \bigcup\limits_{i=0}^{\ell} A_i \qquad \text{ and }  \qquad B_{\leq \ell}=\bigcup\limits_{i=0}^{\ell}B_i.
\] 
In the second case, for the set of edges $H' \subseteq \pazocal{E}_{\alpha T}$ 
that have a hyperedge from $\pazocal{E}_{\beta T}$ disjoint from $M$, 
the algorithm finds a layer with a large number of discovered nodes that can be swapped out for a subset of nodes that $H'$ covers.

At the end of each iteration, the algorithm checks whether there is any layer $\ell$ containing a large set of edges in $A_\ell$ 
with at least $\beta \cdot T$ value disjoint from $M$. If such a layer exists, the algorithm continues swapping edges into the matching from these layers.

\subsection{A detailed procedure}

See a formal description of our procedure in Algorithm~\ref{fig:Algorithm}.
Recall $\delta = \max_{w \in W} p_w/T$. 
The parameters dictating the size of the adding and blocking edges are
\[\alpha := \frac25\cdot (1 - \delta) - \frac{\varepsilon}{2} \qquad \text{ and }\qquad \beta := \frac15\cdot (1-  \delta) -\varepsilon,\]
and other parameters for the algorithm are $\kappa := \varepsilon/2$, $\phi := \frac{\alpha -\beta}{\delta + \alpha}, $
\[ \mu :=\frac{1}{1+\delta} \left (1-\alpha-\left (  \frac{\beta}{\alpha-\beta} + \kappa\right ) \cdot (\alpha+\delta) - \delta  \right), \qquad  c := \frac{(\alpha-\beta) \cdot \mu}{\beta+\delta + \kappa \cdot (\alpha-\beta)} , \qquad \gamma := \frac{1}{2\cdot \left(\frac{\log(\frac{2}{c})}{\log(1+c)}+1\right)}, \] 
for 
$0<\varepsilon \leq (1-\delta)/5$. We note that $\mu \geq  \varepsilon$, 
$\phi  \geq  \varepsilon$ and $c \geq \varepsilon/3$, for any $0  <\delta \leq 1/6$.

\begin{figure}
\hrule \vspace{2mm}
%\begin{algorithm}
\begin{algorithmic}
  {\small 
\State {\bf Input:} Node $i_0$ and set $S \in \pazocal{I}$ with $i_0 \in S$. Matching $M \subseteq \pazocal{E}_{\beta T}$ with $M_X = S \setminus \{i_0\}$

\State Initialize: $A=A_0 = \emptyset$, $B=B_0 = \emptyset$, $C = \{i_0\}$, $\ell=0$

\smallskip
\While{TRUE} %\While{While $i_0$ is unmatched by $M$}
\State Find disjoint $H \subseteq \pazocal{E}_{\alpha T}$
covering $D \subseteq (X \setminus S) \cup C$, such that 
$|D| \geq   \mu \cdot |C| $, $(S \setminus C) \cup D \in \pazocal{I}$,  and $H_W$ is disjoint from $A_W \cup B_W$ \qquad  \textit{//\; Possible by Lemma~\ref{lem:GreedyMatchingExpansionLemma} with $W' = A_W \cup B_W$ }
\\ 
\textit{//\;Build the next layer in the augmenting tree} 
\If{$H$ intersects at least  $ \phi \cdot |H|$  many edges $M$ on $W$-side}
\State Let $B_{\ell+1} =\{e \in M : e \cap H \neq \emptyset\}$ and  $A_{\ell+1} = H$
\State Update $ B \leftarrow B \cup B_{\ell+1}$ and   $A \leftarrow A \cup A_{\ell+1}$
\State Update $C \leftarrow B_X \cup \{ i_0\}$ and  $\ell \leftarrow \ell + 1$
\\
\textit{// \; Swap sets and collapse layers}
\Else { $ H'=\{e \in H : \text{val}(e_W \setminus M_W) \geq \beta T \}$ has $  |H'| \geq \phi\cdot |H|$ } \qquad  \textit{//\; \texttt{If/else} occurs by Lemma~\ref{lem:SizeOfOverlapOfNewEdgesWithM}}
\State  For all $e \in H'$, choose one $e ' \subseteq e$ with $e' \in \pazocal{E}_{ \beta T}$ and $e'_W \cap M_W = \emptyset$;
 replace $e$ for $e'$ in $H'$
\begin{comment}
\Indent 
  \State Let $D' \subseteq D$ be the nodes covered by $H'$
  \State Let $C' \subseteq C$ be such that $|C'|=|D'|$ and $S \setminus C' \cup D' \in \pazocal{I}$ %(and if possible $i_0 \notin C'$).
  \If {$i_0 \in C'$}
    \State Let $i_1 \in D'$ be such that $S \setminus \{ i_0\} \cup \{ i_1\} \in \pazocal{I}$ and let $e_1 \in H'$ be the edge covering $i_1$
    \State Return $ M \cup \{ e_1\}$ covering  $S \setminus \{ i_0\} \cup \{ i_1\}$ and \textbf{terminate}
  \EndIf 
  \State Fix layer $\widetilde{\ell} \leq \ell$ containing  $\widetilde{C} \subseteq C' \cap (B_{\widetilde{\ell}})_X$, with $|\widetilde{C}| \geq \gamma \cdot |C'|$  \qquad \emph{//\;
 By Lemma~\ref{lem:ConstantSizeSwap}, such a $\widetilde{C}$ exists}
   \State Let $\widetilde{D} \subseteq D'$ be such that $|\widetilde{C}| = |\widetilde{D}|$ and $S \setminus \widetilde{C} \cup \widetilde{D} \in \pazocal{I}$
  \State Let $\widetilde{H} \subseteq H'$ be such that $\widetilde{H}$ covers $\widetilde{D}$, and let $\widetilde{M} \subseteq M$ be such that $\widetilde{M}$ covers $\widetilde{C}$
\EndIndent
\end{comment}
\State \textbf{Run} Algorithm \ref{fig:swap} on $M, S, H',B,$ $\ell+1$

\State \hspace{5mm}
Get outputs $\widetilde{M} \subseteq M$ covering $\widetilde{C}$, $\widetilde{H} \subseteq H'$ covering $\widetilde{D}$,  $\widetilde{\ell}$
\State Update $M \leftarrow M \setminus \widetilde{M} \cup \widetilde{H}$, $S \leftarrow S \setminus \widetilde{C} \cup \widetilde{D}$, $A \leftarrow A_{\leq \widetilde{\ell}}$, and $B \leftarrow B_{\leq \widetilde{\ell}} \setminus \widetilde{M}$
\State Update $C \leftarrow B_X \cup \{ i_0\}$ and $\ell \leftarrow \widetilde{\ell}$
\EndIf
\\
\textit{// \; Check whether other sets should be swapped and any other layers collapsed}
\Indent
\For{  all layers $i \leq \ell$}
\State  Let $A'_{i} = \{e \in A_{i} : \text{val}(e_W \setminus M_W) \geq \beta T\}$ 
\EndFor
\While { there exists a layer $\ell^* \leq \ell$ with $|A'_{\ell^*}| \geq \kappa \cdot  |B_{\ell^*}|$} \qquad  \emph{ //\;Let $\ell^*$ be the lowest such  layer} 
\State  For all $e \in A'_{\ell^*}$, choose one $e ' \subseteq e$ with $e' \in \pazocal{E}_{\beta T}$ and $e'_W \cap M_W = \emptyset$, and replace $e$ for $e'$ in $A'_{\ell^*}$
\State \textbf{Run} Algorithm \ref{fig:swap} on $M, S, A'_{\ell^*},   B,$ $\ell^*$ 
\State \hspace{5mm} Get outputs $\widetilde{M} \subseteq M$ covering $\widetilde{C}$, $\widetilde{A}_{\ell^*} \subseteq A'_{\ell^*}$ covering $\widetilde{D}$,  $\widetilde{\ell}$
\begin{comment}
\State  Let $D' \subseteq D$ be the nodes covered by $A'_{\ell^*}$
\State Let $C' \subseteq C$ be such that $|C' |=|D' |$ and $ S \setminus C' \cup D' \in \pazocal{I}$
  \If {$i_0 \in C'$}
    \State Let $i_1 \in D'$ so that $S \setminus \{ i_0\} \cup \{ i_1\} \in \pazocal{I}$ and let $e_1 \in A_{\widetilde{\ell}}'$ be the edge covering $i_1$
  \State Return $M \cup \{ e_1\}$ covering $S \setminus \{ i_0\} \cup \{ i_1\} $ and \textbf{terminate}.
\EndIf 
\State Fix layer $\widetilde{\ell} \leq \ell^*$ containing $\widetilde{C} \subseteq C' \cap (B_{\widetilde{\ell}})_X $, with $|\widetilde{C}| \geq \gamma \cdot |C'|$ \qquad \emph{//\;
 By Lemma~\ref{lem:ConstantSizeSwap}, $\widetilde{C}$ exists}
\State Let $\widetilde{D} \subseteq D'$ be such that $|\widetilde{C}| = |\widetilde{D}|$ and $S \setminus \widetilde{C} \cup \widetilde{D} \in \pazocal{I}$
\State Let $\widetilde{A}_{\ell^*} \subseteq A'_{\ell^*}$ be such that $\widetilde{A}_{\ell^*}$ covers $\widetilde{D}$, and let $\widetilde{M} \subseteq M$ be such that $\widetilde{M}$ covers $\widetilde{C}$
\end{comment} 
\State  Update $M \leftarrow M \setminus \widetilde{M} \cup \widetilde{A}_{\ell^*}$, and $S \leftarrow S \setminus \widetilde{C} \cup \widetilde{D}$, $A \leftarrow A_{\leq \widetilde{\ell}} $ and $B \leftarrow B_{\leq \widetilde{\ell}} \setminus \widetilde{M}$
\State Update $C \leftarrow B_X \cup \{i_0\}$ and $\ell \leftarrow \widetilde{\ell}$
\EndWhile
\EndIndent
\EndWhile

\smallskip }
\end{algorithmic}
\hrule
\caption{Main algorithm} \label{fig:Algorithm}
%\end{algorithm}
\end{figure}

\begin{figure}
\hrule \vspace{2mm}
%\begin{algorithm}
\begin{algorithmic}
\State {\bf Input:} 
Matching $M$, $S \in \pazocal{I}$ with $M_X = S \setminus \{i_0\}$, edges $E' \subseteq \pazocal{E}_{\beta T}$, blocking edges $B$, and layer $\ell$
% \State Let $C= (B_{\leq \ell})_X \cup \{i_0\}$
\State  Let $D'$ be the nodes covered by $E'$, i.e., $D' = (E')_X$
\State Let $C' \subseteq (B_{\leq \ell-1})_X \cup \{i_0\}$ be such that $|C' |=|D' |$ and $ S \setminus C' \cup D' \in \pazocal{I}$
  \If {$i_0 \in C'$}
    \State Let $i_1 \in D'$ so that $S \setminus \{ i_0\} \cup \{ i_1\} \in \pazocal{I}$ and let $e_1 \in E'$ be the edge covering $i_1$
  \State Return $M \cup \{ e_1\}$ covering $S \setminus \{ i_0\} \cup \{ i_1\} $ and \textbf{terminate}.
\EndIf 
\State Fix layer $\widetilde{\ell} \leq \ell-1$ containing $\widetilde{C} \subseteq C' \cap (B_{\widetilde{\ell}})_X $, with $|\widetilde{C}| \geq \gamma \cdot |C'|$ \qquad \emph{//\;
 By Lemma~\ref{lem:ConstantSizeSwap}, $\widetilde{C}$ exists}
\State Let $\widetilde{D} \subseteq D'$ be such that $|\widetilde{C}| = |\widetilde{D}|$ and $S \setminus \widetilde{C} \cup \widetilde{D} \in \pazocal{I}$
\State Let $\widetilde{E}\subseteq E'$ be such that $\widetilde{E}$ covers $\widetilde{D}$, and let $\widetilde{M} \subseteq M$ be such that $\widetilde{M}$ covers $\widetilde{C}$
\State \textbf{Return}  $\widetilde{M} \subseteq M$ covering $\widetilde{C}$,  $\widetilde{E} \subseteq E'$ covering $\widetilde{D}$, and $\widetilde{\ell}$
\end{algorithmic}
\hrule
\caption{Swap subroutine} \label{fig:swap}
%\end{algorithm}
\end{figure}

\subsection{Correctness of the algorithm}
Here, we prove several lemmas about the performance of Algorithm \ref{fig:Algorithm}, leading to the proof of  Theorem~\ref{thm:MainMatroidAlgorithm}. 
See Figures \ref{fig:CaseIofTheAlgorithm} and \ref{fig:CaseIIofTheAlgorithm}, which illustrate the \texttt{if/else} statement of Algorithm \ref{fig:Algorithm}.

We begin by building up to our \textit{Expansion Lemma}, Lemma~\ref{lem:GreedyMatchingExpansionLemma}. 
Our algorithm takes a fixed independent set, $S$, and swaps $C \subseteq S$ out of $S$ for a set of nodes $D$,
in order to construct a new independent set of the same size.
This is possible by Lemma~\ref{lem:GreedyMatchingExpansionLemma}. 
Recall a variant of the so-called \emph{Exchange Lemma}.
 For independent sets $Y,Z \in \pazocal{I}$, let $H_{\pazocal{M}}(Y,Z)$ 
 denote the bipartite graph on parts $Y$ and $Z$ (if $Y \cap Z \neq \emptyset$, 
 then have one copy of the intersection on the left and one on the right).
For  $i \in Y \setminus Z$ and $j \in Z \setminus Y$ we insert an edge $(i,j)$ 
in $H_{\pazocal{M}}(Y,Z)$ if $Y \setminus \{ i\} \cup \{ j\} \in \pazocal{I}$. 
Otherwise, for $i \in Y \cap Z$, there is an edge between the left and right copies of $i$,
 and this is the only edge for both copies of $i$. 

\begin{lemma}[Exchange Lemma]\label{lem:ExchangeLemma}
For any matroid $\pazocal{M} = (X,\pazocal{I})$ and independent set $Y,Z \in \pazocal{I}$ with $|Y| \leq |Z|$,
the exchange graph $H_{\pazocal{M}}(Y,Z)$ contains a left perfect matching.
\end{lemma}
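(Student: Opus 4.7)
The plan is to reduce to Hall's marriage theorem. First, by construction of $H_{\pazocal{M}}(Y,Z)$, each copy of an element $i \in Y \cap Z$ has exactly one incident edge — the one joining its left and right copies — so those edges are forced into any left-perfect matching and may be removed from the problem without loss. It then suffices to find a matching saturating $Y \setminus Z$ inside the bipartite subgraph on $(Y \setminus Z) \cup (Z \setminus Y)$ whose edges are the pairs $(i,j)$ with $Y \setminus \{i\} \cup \{j\} \in \pazocal{I}$.

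I would verify Hall's condition on this reduced graph. Fix any nonempty $A \subseteq Y \setminus Z$ and set $J := Y \setminus A \in \pazocal{I}$, so $|J| = |Y| - |A| < |Y| \leq |Z|$. Iteratively applying the matroid exchange property to $J$ and $Z$ produces, at each step, some $z \in Z \setminus J$ with $J \cup \{z\} \in \pazocal{I}$. The critical bookkeeping point is that $Z \setminus J = Z \setminus Y$: since $A \subseteq Y \setminus Z$ one has $A \cap Z = \emptyset$, while $Y \cap Z \subseteq J$. Hence every exchange actually draws from $Z \setminus Y$, and after $|A|$ augmentations I obtain a set $T \subseteq Z \setminus Y$ with $|T| = |A|$ and $(Y \setminus A) \cup T \in \pazocal{I}$.

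The last step is to show $T \subseteq N(A)$, which immediately yields $|N(A)| \geq |T| = |A|$. Fix $j \in T$. If $Y \cup \{j\} \in \pazocal{I}$, then $Y \setminus \{i\} \cup \{j\} \in \pazocal{I}$ for every $i \in A$ by monotonicity, so $j \in N(A)$. Otherwise $Y \cup \{j\}$ contains a unique circuit $C(Y,j)$, and the standard circuit characterization of independent exchanges states that $Y \setminus \{i\} \cup \{j\} \in \pazocal{I}$ precisely when $i \in C(Y,j) \setminus \{j\}$. If $C(Y,j) \cap A = \emptyset$, then $C(Y,j) \subseteq (Y \setminus A) \cup \{j\} \subseteq (Y \setminus A) \cup T$, contradicting the independence of $(Y \setminus A) \cup T$. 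So $C(Y,j) \cap A \neq \emptyset$ and $j \in N(A)$, as required.

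The only subtle point — and the one I would want to double-check most carefully — is the claim that the iterated exchange never has to pull in an element of $A$; this is what forces $T \subseteq Z \setminus Y$ and is precisely what makes Hall's condition meaningful. Once that is in place, the rest is a routine combination of matroid exchange and the unique-circuit property.
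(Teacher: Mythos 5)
Your proof is correct. The paper itself does not prove this lemma --- it is recalled as a known variant of the matroid exchange lemma (a standard fact in matroid theory) --- so there is no in-paper argument to compare against; your reduction to Hall's theorem with a fundamental-circuit argument is a legitimate self-contained derivation. The two points worth confirming both check out: (i) the identity edges on $Y \cap Z$ are indeed forced and independent of the rest, since by construction those copies have no other incident edges, so the problem reduces to matching $Y \setminus Z$ into $Z \setminus Y$; and (ii) the subtlety you flagged --- that the iterated exchange never pulls in an element of $A$ --- holds because at every stage the current set $J'$ contains $J = Y \setminus A$, hence $Z \setminus J' \subseteq Z \setminus J = Z \setminus Y$ (using $A \cap Z = \emptyset$), so all $|A|$ augmenting elements come from $Z \setminus Y$ and each intermediate set has size at most $|Y|-1 < |Z|$, so the exchange axiom applies throughout. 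The final step, that every $j \in T$ lies in $N(A)$ via the unique circuit $C(Y,j)$ meeting $A$ (else $C(Y,j) \subseteq (Y \setminus A) \cup T \in \pazocal{I}$, a contradiction), is the standard fundamental-circuit characterization and is applied correctly.
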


Next, we prove several lemmas about vectors in the base polytope with respect to sets containing swappable elements.
Lemma~\ref{lem:GreedyMatchingExpansionLemma} relies on a \textit{Swapping Lemma}, 
Lemma~\ref{lem:SumOverBasePolytopeGeneral}, for which the next lemma serves as a helper function.
\begin{lemma}[Weak Swapping Lemma] \label{lem:SumOverBasePolytope}
Let $\pazocal{M} = (X,\pazocal{I})$ be a matroid with an independent set $S \in \pazocal{I}$.
For $C \subseteq S$, define 
\[
    U := \{ i \in (X \setminus S) \cup C \mid (S \setminus C) \cup \{i\} \in \pazocal{I}\}.
\]
Then for any vector $x \in P_{\pazocal{B}(\pazocal{M})}$ in the base polytope one has $\sum_{i \in U} x_i \geq |C|$.
\end{lemma}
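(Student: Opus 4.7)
The plan is to reinterpret the set $U$ in matroid-theoretic terms and then apply the standard defining inequalities of the base polytope. Write $\bar S := S \setminus C$ and observe that, by definition, $U$ consists exactly of those elements $i \in X \setminus \bar S$ for which $\bar S \cup \{i\} \in \pazocal{I}$. Since $\bar S$ is independent (it is a subset of $S$), an element $i \notin \bar S$ satisfies $\bar S \cup \{i\} \in \pazocal{I}$ if and only if $i$ does not lie in the matroid closure $\textrm{cl}(\bar S)$. Hence
\[
    U \;=\; X \setminus \textrm{cl}(\bar S).
\]
This is the key identification; once this is in place the rest is a one-line calculation.

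Next I would use the standard fact that any $x \in P_{\pazocal{B}(\pazocal{M})}$ satisfies $\sum_{i \in X} x_i = \textrm{rank}(\pazocal{M})$ together with the rank inequality $\sum_{i \in A} x_i \le r(A)$ for every $A \subseteq X$. Applying the latter to $A = \textrm{cl}(\bar S)$ and using $r(\textrm{cl}(\bar S)) = r(\bar S) = |\bar S| = |S| - |C|$ (the first equality because the closure does not change rank, the second because $\bar S$ is independent), we obtain
\[
  \sum_{i \in U} x_i \;=\; \textrm{rank}(\pazocal{M}) - \sum_{i \in \textrm{cl}(\bar S)} x_i \;\geq\; \textrm{rank}(\pazocal{M}) - (|S| - |C|) \;\geq\; |C|,
\]
where the last step uses $|S| \le \textrm{rank}(\pazocal{M})$ because $S$ is independent.

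There is no real obstacle here: the only point that might require a sentence of justification is the equivalence between $\bar S \cup \{i\} \in \pazocal{I}$ and $i \notin \textrm{cl}(\bar S)$, which follows immediately from the characterization of closure via rank. If one prefers to avoid the closure operator, the same calculation can be phrased directly: take $A := X \setminus U$, verify $r(A) \le |S| - |C|$ by noting that any independent subset of $A$ together with $\bar S$ remains independent (elements of $A \setminus \bar S$ lying in an independent extension of $\bar S$ would contradict $A = X \setminus U$), and then apply $\sum_{i \in A} x_i \leq r(A)$ and sum to $\textrm{rank}(\pazocal{M})$. Either route gives the bound, so the lemma reduces to a short computation once the combinatorial meaning of $U$ is recognized.
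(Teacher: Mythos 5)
Your proof is correct, but it takes a genuinely different route than the paper's. The paper's argument decomposes $x$ as a convex combination of indicator vectors of bases, and then for each basis $B$ invokes the Exchange Lemma (left-perfect matching in the exchange graph $H_{\pazocal{M}}(S,B)$) to conclude $|U \cap B| \ge |C|$; the integrality of the base polytope and the matching structure are both essential there. You instead identify $U = X \setminus \textrm{cl}(S \setminus C)$ and then apply the polyhedral description of the base polytope directly: the rank constraint $x(\textrm{cl}(S\setminus C)) \le r(\textrm{cl}(S\setminus C)) = |S|-|C|$ together with $x(X) = \textrm{rank}(\pazocal{M}) \ge |S|$ gives the bound in one line. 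This avoids both the vertex decomposition and the exchange-graph machinery, and it makes the mechanism transparent: the whole content is the single rank inequality for the closed set $\textrm{cl}(S \setminus C)$. The small trade-off is that you need the rank-function description of $P_{\pazocal{B}(\pazocal{M})}$ and the rank characterization of closure, whereas the paper works purely from convexity plus the combinatorial Exchange Lemma, which they have stated anyway for later use. Both proofs are complete; yours is arguably the more elementary once the polyhedral description is granted, and it also makes tightness visible (equality forces $|S|=\textrm{rank}(\pazocal{M})$ and equality in the rank constraint).
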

\begin{proof}
Note that in particular $C \subseteq U$. Moreover, an equivalent definition of $U$ is 
\[
    U = \{ i \in (X \setminus S) \cup C \mid \exists j \in C: (S \setminus \{j\}) \cup \{ i\} \in \pazocal{I}\}.
\]
Due to the integrality of the base polytope, there is a basis $B \in \pazocal{I}$ 
with $\sum_{i \in U} x_i \geq \sum_{i \in U} (\chi(B))_i = |U \cap B|$, 
where $\chi(B) \in \{ 0,1\}^X$ is the characteristic vector of $B$. 
As $S$ and $B$ are independent sets with $|S| \leq |B|$, 
from Lemma~\ref{lem:ExchangeLemma} there is a left-perfect matching in the exchange graph $H_{\pazocal{M}} (S,B)$. 
The neighborhood of $C$ in $H_{\pazocal{M}} (S,B)$ is  $U \cap B$.
As there is a left-perfect matching, $|B \cap U|$ is least $|C|$ and hence $\sum_{i \in U} x_i \geq |U \cap B| \geq |C|$.
\end{proof}

Next, we derive a more general form of the Swapping Lemma (which coincides
with the previous Lemma~\ref{lem:SumOverBasePolytope} if $D = \emptyset$): 
\begin{lemma}[Strong Swapping Lemma] \label{lem:SumOverBasePolytopeGeneral}
Let $\pazocal{M} = (X,\pazocal{I})$ be a matroid with an independent set $S \in \pazocal{I}$.
Let $C \subseteq S$ and $D \subseteq (X \setminus S) \cup C$ with $|D| \leq |C|$ and $S \setminus C \cup D \in \pazocal{I}$. Define
\[
    U := \{ i \in ((X \setminus S) \cup C) \setminus D \mid S \setminus C \cup D \cup \{i\} \in \pazocal{I}\}.
\]
Then for any vector $x \in P_{\pazocal{B}(\pazocal{M})}$ in the base polytope one has $\sum_{i \in U} x_i \geq |C| - |D|$.
\end{lemma}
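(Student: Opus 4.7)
The plan is to reformulate $U$ as the set of elements extending a single independent set and then invoke a direct counting argument in the spirit of Lemma~\ref{lem:SumOverBasePolytope}. First I would set $S' := (S \setminus C) \cup D$, which is independent by hypothesis. The containment $D \subseteq (X \setminus S) \cup C$ forces $D$ and $S \setminus C$ to be disjoint, so $|S'| = |S|-|C|+|D|$, and a short set-theoretic calculation gives $X \setminus S' = ((X \setminus S) \cup C) \setminus D$. Consequently
\[
  U \;=\; \{\, i \in X \setminus S' \,\mid\, S' \cup \{i\} \in \pazocal{I}\,\},
\]
i.e.\ $U$ is exactly the set of elements that extend $S'$ while preserving independence.

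Next, mirroring Lemma~\ref{lem:SumOverBasePolytope}, I would use integrality of the base polytope to pick a basis $B$ with $\sum_{i \in U} x_i \geq |U \cap B|$ and reduce the statement to the combinatorial bound $|B \cap U| \geq |C| - |D|$ for this $B$. Let $r := \textrm{rank}(\pazocal{M})$. If $|S'| = r$ then $S'$ is itself a basis; combined with $r \geq |S|$ and the hypothesis $|D| \leq |C|$ this forces $|D| = |C|$, so the bound is vacuous. Otherwise $|S'| < r = |B|$, and repeated application of the matroid exchange property between $S'$ and $B$ produces $r - |S'|$ distinct elements $i_1, \ldots, i_{r - |S'|} \in B \setminus S'$ with $S' \cup \{i_1, \ldots, i_{r - |S'|}\} \in \pazocal{I}$. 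By monotonicity every $i_j$ satisfies $S' \cup \{i_j\} \in \pazocal{I}$ and therefore lies in $U$, so
\[
  |B \cap U| \;\geq\; r - |S'| \;=\; (r - |S|) + (|C| - |D|) \;\geq\; |C| - |D|,
\]
which is the desired inequality.

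The step I expect to be the main obstacle is resisting the temptation to adapt the Weak Swapping Lemma's neighborhood-of-$C$ argument in the exchange graph $H_{\pazocal{M}}(S,B)$ to the more complex situation where both $C$ is being swapped out and $D$ is being swapped in; such an argument becomes genuinely awkward because matched pairs only give $S'\setminus\{j\}\cup\{i\}\in\pazocal{I}$, not the stronger $S'\cup\{i\}\in\pazocal{I}$ we need. The point of the reformulation above is precisely to sidestep this: once $U$ is recognized as the set of extensions of the single independent set $S'$, the matching machinery becomes unnecessary --- every basis meets the complement of $\textrm{cl}(S')$ in at least $r - |S'|$ elements, and the correct slack $|C|-|D|$ drops out automatically from $|S'| = |S|-|C|+|D|$ together with the basic inequality $r \geq |S|$.
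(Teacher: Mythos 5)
Your proof is correct, but it follows a genuinely different route from the paper. The paper keeps the Weak Swapping Lemma (Lemma~\ref{lem:SumOverBasePolytope}) as the workhorse: it partitions $C = C_1 \dot{\cup} C_2$ with $C \cap D \subseteq C_1$, $|C_1| = |D|$ and $S \setminus C_1 \cup D \in \pazocal{I}$, rewrites $U$ as the set of single-element swaps for $C_2$ relative to the independent set $S \setminus C_1 \cup D$, and then invokes the weak lemma, which in turn rests on the exchange-graph matching of Lemma~\ref{lem:ExchangeLemma}. You instead observe that $U$ is literally the set of elements extending the single independent set $S' = (S \setminus C) \cup D$, pick a basis $B$ with $\sum_{i \in U} x_i \geq |U \cap B|$ by integrality of the base polytope, and then augment $S'$ from $B$ via the plain exchange axiom to find $r - |S'| = (r - |S|) + (|C| - |D|) \geq |C| - |D|$ elements of $B \cap U$ (with the degenerate case $|S'| = r$ handled trivially, as you note). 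All the individual steps check out: $D \cap (S \setminus C) = \emptyset$ gives $|S'| = |S| - |C| + |D|$ and $X \setminus S' = ((X\setminus S) \cup C)\setminus D$, the augmented elements lie in $B \setminus S'$ and extend $S'$ individually by monotonicity, and $r \geq |S|$ since $S \in \pazocal{I}$. What your approach buys is economy: it bypasses both the partition-of-$C$ step (which the paper does not justify in detail) and the exchange-graph machinery entirely, and with $D = \emptyset$ it yields the Weak Swapping Lemma as a special case; what the paper's route buys is modularity, deriving the strong statement as a corollary of the weak one already proved. Your closing remark correctly identifies why a naive adaptation of the neighborhood-of-$C$ matching argument would be awkward here, and your reformulation is exactly the right way to avoid it.
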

\begin{proof}
Partition $C = C_1 \dot{\cup} C_2$ so that $C \cap D \subseteq C_1$, $|C_1| = |D|$ and
$S' := S \setminus C_1 \cup D \in \pazocal{I}$. 
Then note that
\begin{eqnarray*}
    U &=& \Big\{ i \in X \setminus \underbrace{(S \setminus C \cup D)}_{=S' \setminus C_2} 
    \mid \underbrace{S \setminus C \cup D}_{=S' \setminus C_2} \cup \{ i\} \in \pazocal{I} \Big\} \\
&=& \{ i \in (X \setminus S') \cup C_2 \mid S' \setminus C_2 \cup \{ i\} \in \pazocal{I}\}.
\end{eqnarray*}
Then applying Lemma~\ref{lem:SumOverBasePolytope} gives
$\sum_{i \in U} x_i \geq |C_2| = |C| - |D|.$
\end{proof}

 \smallskip
 
We bound the value of the resources in $A_W \cup B_W$ in the following claim.

\begin{claim}\label{claim: size-W'}
    For $W'= A_W \cup B_W$, at the beginning of an iteration of Algorithm \ref{fig:Algorithm}, 
    
    \[\text{val}(W' )\leq \left ( \left ( \frac{\beta}{\alpha-\beta} + \kappa\right ) \cdot (\alpha+\delta) + \delta \right ) \cdot |C|T.\]
\end{claim}
\begin{proof}
    Let $\ell$ be the highest layer in the augmenting tree so far.
    Recall the set of adding edges in layer $1<i \leq \ell$ that have value at least $\beta \cdot T$ free from $M$ is denoted
    \[A_i' = \{e \in A_i : \text{val}(e_W \setminus M_W) \geq \beta \cdot T\}.\]
    By definition of $A_i'$, every edge in $A_i \setminus A_i'$ has more than $(\alpha-\beta)\cdot T$ value in $B_i$. By the minimality of the blocking edges, each edge in $B_i$ contains value at most  $(\beta+\delta)\cdot T$.
    Therefore we see that
 \begin{eqnarray*}
        (\alpha-\beta)\cdot T \cdot  |A_{i} \setminus A_i'| &\leq& (\beta+\delta) \cdot T \cdot |B_{i}|\\
      |A_{i} \setminus A_i'| &\leq& \frac{\beta+\delta}{\alpha-\beta} \cdot |B_{i}|.
\end{eqnarray*}

From the last \texttt{while} loop in the algorithm, we know that $|A_i'| <  \kappa \cdot |B_i|$ for every layer $i$. In total we bound the value in $W_i' = (A_i)_W \cup (B_i)_W$: 
 \begin{eqnarray*}
        \text{val}(W_i') &=& \text{val}(B_i)+\text{val}(A_i' \setminus B_i)+\text{val}(A_i \setminus (A_i' \cup B_i))\\
        &=& |B_i| \cdot (\beta +\delta ) \cdot T+|A_i'| \cdot (\alpha+\delta) \cdot T+|A_i \setminus A_i'|   \cdot \beta \cdot T\\
        &\leq & |B_i| \cdot (\beta +\delta ) \cdot T+\kappa \cdot |B_i| \cdot (\alpha+\delta) \cdot T+\frac{\beta+\delta}{\alpha-\beta} \cdot |B_{i}|  \cdot \beta \cdot T\\
         &= &  \left ( \beta +\delta +\kappa \cdot (\alpha+\delta)+\frac{\beta+\delta}{\alpha-\beta} \cdot \beta  \right ) \cdot |B_i|T\\
         &= &  \left ( \left ( \frac{\beta}{\alpha-\beta} + \kappa\right ) \cdot (\alpha+\delta) + \delta \right ) \cdot |B_i|T.
 \end{eqnarray*}
Summing this up over all $i$, and using that $\sum_{i=1}^\ell |B_i| \leq |C|$ and $W' = A_W \cup B_W = \cup_{i=1}^\ell W_i'$ gives
     \[\text{val}(W')= \sum_{i=1}^\ell \text{val}(W_i')=  \left ( \left ( \frac{\beta}{\alpha-\beta} + \kappa\right ) \cdot (\alpha+\delta) + \delta \right ) \cdot |C|T.\]
\end{proof}

% \begin{proof}
%     There are $|C|$ edges in $B$, and the edges each contain at most $\beta \cdot T$ value in $W$ not covered by any edge in $A$. 
    
%     Each layer $0<i \leq \ell$, for $\ell$ the current highest layer in the augmenting tree, has the property that at least $|A_{i}| - \kappa \cdot |B_{i}|$ edges in $A_i$ have more than $(\alpha - \beta)\cdot T$ value in $B_{i}$. This invariant is guaranteed by the last \texttt{while} loop in the algorithm. Therefore we see that
%     \[(\alpha-\beta)\cdot T \cdot  (|A_{i}|-\kappa \cdot |B_{i}|) \leq \beta \cdot T \cdot |B_{i}|, \]
%     which summing up over all $i$ gives that
%      \[(\alpha-\beta)\cdot T \cdot  (|A_{\leq \ell}|-\kappa \cdot |B_{\leq \ell}|) \leq \beta \cdot T \cdot |B_{\leq \ell}| = \beta \cdot  T \cdot |C| . \]
%      Rearranging and using that $|B_{\leq \ell}| \leq |C|$, we find that 
     % $|A_{\leq \ell}| \leq \big (\frac{\beta}{\alpha-\beta} + \kappa \big ) \cdot |C|$, and since each edge in $A_{\leq \ell}$ has $(\alpha + \delta) \cdot T$ nodes in $W$, the claim follows.
% \end{proof}

\begin{lemma}[Expansion Lemma] \label{lem:GreedyMatchingExpansionLemma}
Let $C \subseteq S \in \pazocal{I}$, $W' \subseteq W$ with $\text{val}(W') \leq \big (\frac{\beta}{\alpha-\beta} + \kappa  \big ) \cdot (\alpha + \delta) \cdot T \cdot |C| + \delta \cdot T \cdot |C|$.
Further, let $\mu = \frac{1}{1+\delta} \cdot \left (1-\alpha-\delta-(\alpha+\delta) \cdot \left (\frac{ \beta}{\alpha-\beta}+\kappa\right)\right )>  \varepsilon > 0$,
 and assume that there exists $(x,y) \in Q(T)$. 
Then there is a set $D \subseteq (X \setminus S) \cup C$ of size $|D| \geq \lceil \mu \cdot |C| \rceil$
covered by a matching $H \subseteq \pazocal{E}_{\alpha T}$, so that $H_W \cap W' = \emptyset$ and $(S \setminus C) \cup D \in \pazocal{I}$.
\end{lemma}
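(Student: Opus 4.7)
The plan is to prove the Expansion Lemma by a greedy construction: I build the matching $H$ and the swap-in set $D$ one edge at a time. At each step, I maintain the invariant that $(S \setminus C) \cup D \in \pazocal{I}$, that $H$ is a matching of $\alpha$-edges covering $D$, and that $H_W \cap W' = \emptyset$. I terminate as soon as $|D| \geq \lceil \mu |C| \rceil$. The main task is to show the greedy process never gets stuck before reaching this size.

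At an intermediate step with a current $D$ (where $|D| < \mu |C|$), I apply the Strong Swapping Lemma (Lemma~\ref{lem:SumOverBasePolytopeGeneral}) to the current $S$, $C$, and $D$ to obtain the set $U \subseteq ((X \setminus S) \cup C) \setminus D$ of admissible next candidates, with $\sum_{i \in U} x_i \geq |C| - |D|$. Let $W'' := W' \cup H_W$ be the currently forbidden resources. The claim to prove is that some $i \in U$ satisfies $\text{val}(N(i) \setminus W'') \geq \alpha T$, which would let me select a minimal $\alpha$-edge at $i$ disjoint from $W''$, extend $D$ and $H$, and preserve the invariant.

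To prove this claim, I argue by contradiction, assuming $\text{val}(N(i) \setminus W'') < \alpha T$ for every $i \in U$. Combining this with $y_{iw} \leq x_i$ gives $\sum_{w \notin W''} p_w y_{iw} \leq x_i \cdot \text{val}(N(i) \setminus W'') < \alpha T \, x_i$. Plugging into the LP constraint $\sum_w p_w y_{iw} \geq T x_i$ yields $(1 - \alpha) T x_i \leq \sum_{w \in W''} p_w y_{iw}$. Summing over $i \in U$, using $\sum_i y_{iw} \leq y(\delta(w)) \leq 1$, and applying the Swapping Lemma bound produces
\[
(1-\alpha) T (|C| - |D|) \leq \text{val}(W'').
\]
Now I upper bound $\text{val}(W'')$: the hypothesis gives $\text{val}(W') \leq (\alpha + \beta + \delta) T |C|$, and minimality of $\alpha$-edges gives $\text{val}(H_W) \leq (\alpha + \delta) T |D|$. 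Substituting and simplifying yields $(1 - 2\alpha - \beta - \delta) |C| \leq (1 + \delta) |D|$, i.e.\ $|D| \geq \mu |C|$, contradicting the assumption that $|D| < \mu |C|$.

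The only potential obstacle is the LP manipulation linking $y_{iw} \leq x_i$ with $\sum_w p_w y_{iw} \geq T x_i$, since one needs the right ``per-node'' bound on the wasted mass on $W''$; the key observation that makes the argument clean is that multiplying the bound $\text{val}(N(i) \setminus W'') < \alpha T$ by the scalar $x_i$ is exactly what aligns with $y_{iw} \leq x_i$. Everything else is a direct combination of the two swapping lemmas with the LP feasibility, and the resulting inequality collapses precisely to the definition of $\mu$.
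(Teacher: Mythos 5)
Your proposal is correct and follows essentially the same route as the paper's proof: a greedy construction of $H$ and $D$, the Strong Swapping Lemma to lower bound $\sum_{i \in U} x_i$, the per-node LP estimate using $y_{iw} \leq x_i$ and $\sum_{w} p_w y_{iw} \geq T x_i$, and a double-counting of the $y$-mass entering $W'' = W' \cup H_W$ to derive the contradiction. The only difference is bookkeeping (you keep $|D|$ symbolic and conclude $|D| \geq \mu |C|$ directly, while the paper substitutes $|D| < \mu|C|$ earlier and contradicts the choice of $\mu$), which does not change the argument.
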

\begin{proof}
Note that $D$ may contain elements from $C$.
Greedily choose $D$ and the matching $H$ with $|D| = |H|$ one node/edge after the other. 
Suppose the greedy procedure gets stuck --- 
no edge can be added without intersecting $W' \cup H_W$. 
For the sake of contradiction assume this happens when $|D| < \mu |C|$. First, let 
\[
    U := \{ i \in ((X \setminus S) \cup C) \setminus D \mid (S \setminus C) \cup D \cup \{i\} \in \pazocal{I}\}
\]
be the nodes which could be added to $D$ while preserving independence. 
Then for our fixed $x \in P_{\pazocal{B}(\pazocal{M})}$, by Lemma~\ref{lem:SumOverBasePolytopeGeneral} one has
\[
\sum_{i \in U} x_i %\geq \underbrace{\sum_{i \in X \setminus (S \setminus C): (S \setminus C) \cup D \cup \{ i\} \in \pazocal{I}} x_i}_{\geq |C|- |D|\textrm{ by Lemma~\ref{lem:SumOverBasePolytopeGeneral}}}  
\geq |C| - |D|
    > (1-\mu) \cdot |C|.
\]

Let $W'' = W' \cup H_W$, for $W' = A_W \cup B_W$, be the right hand side resources
that are being covered by the augmenting tree. 
Using the minimality of the adding and blocking edges and Claim \ref{claim: size-W'},
\begin{eqnarray*}
     \text{val}(W'')  \leq \left (\mu \cdot (\alpha + \delta) + \left (  \frac{\beta}{\alpha-\beta} + \kappa\right ) \cdot (\alpha+\delta) + \delta   \right ) \cdot |C| \cdot T.  
\end{eqnarray*}
By the assumption that the greedy procedure is stuck, there is no edge $e \in \pazocal{E}_{\alpha T}$
with $e_X \in U$ and $e \cap W'' = \emptyset$. If $N(i)$ denotes the neighborhood of $i \in X$ in the bipartite
graph $G$, then this means that val$(N(i) \setminus W'') < \alpha T$ for all $i \in U$.
For every fixed $i \in U$ we can then lower bound the $y$-weight going into $W''$ as 
\begin{eqnarray*}
     \sum_{(i,w) \in E: w \in W''}{p_w} \cdot y_{i,w} &=& \underbrace{\sum_{w \in N(i)} {p_w} \cdot y_{i,w}}_{\geq T x_i} - \sum_{(i,w) \in E: w \notin W''} {p_w} \cdot \underbrace{y_{i,w}}_{\leq x_i}\\
     \sum_{(i,w) \in E: w \in W''}{p_w} \cdot y_{i,w} &\geq& T \cdot {x_i} -{x_i} \cdot \underbrace{\Big(\sum_{(i,w) \in E: w \notin W''} {p_w} \Big)}_{<\alpha T}\geq T \cdot x_i \cdot (1-\alpha). 
\end{eqnarray*}
Double counting the $y$-weight between $U$ and $W''$ by using the bounds shows that
\[
 (1-\alpha) \cdot T \cdot \underbrace{\sum_{i \in U} x_i}_{\geq (1-\mu) |C|} {\leq}
\sum_{(i,w) \in E: i \in U,w \in W''}{p_w} \cdot y_{i,w} \leq \sum_{w \in W''} {p_w} \cdot \underbrace{\sum_{i : (i,w) \in E} y_{iw}}_{\leq 1} \leq  \text{val}(W''),
\]
which simplified gives that
\[
    (1-\alpha) \cdot (1-\mu) \cdot T \cdot |C| < \left (\mu \cdot (\alpha + \delta) + \left (  \frac{\beta}{\alpha-\beta} + \kappa\right ) \cdot (\alpha+\delta) + \delta   \right )\cdot  T \cdot |C|.
\]
Rearranging the above, $\frac{1}{1+\delta} \cdot \left (1-\alpha-\left (  \frac{\beta}{\alpha-\beta} + \kappa\right ) \cdot (\alpha+\delta) - \delta  \right)< \mu$,
contradicting our choice of $\mu$.
\end{proof}

Algorithm \ref{fig:Algorithm} relies on the fact that from the set of hyperedges $H$
guaranteed by the Expansion Lemma,
there is either some constant fraction of $H$ to swap into the matching, 
or a constant fraction of $H$ is blocked by edges in the current matching. 
In the former case, significant space is found in $W$ for $S$. 
In the latter case, enough edges of the matching are intersected to guarantee the next layer in the augmenting tree is large.
The following lemma proves at least one of these conditions occurs.

\begin{lemma} \label{lem:SizeOfOverlapOfNewEdgesWithM}
Set  $\phi := \frac{\alpha - \beta}{\delta + \alpha} > 0$.
Let $M \subseteq \pazocal{E}_{\beta T}$ and $H \subseteq \pazocal{E}_{\alpha T}$ be hypergraph matchings. Further,
let 
\[
    H' := \{  e \in H \mid \text{val}(e_W \setminus M_W) \geq \beta \cdot T\}
\]
  be the edges in $H$ that still have value $\beta \cdot T$ after overlap with $M$ is removed.
 Then either (i) $|H'| \geq \phi \cdot  |H|$ or (ii) $H$  intersects at least  $\phi \cdot |H|$ edges of $M$.
  \end{lemma}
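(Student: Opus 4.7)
The plan is to argue by contradiction: suppose neither (i) nor (ii) holds, so $|H| < \mu|F|$ and $F$ intersects strictly fewer than $\mu|F|$ edges of $M$; call this intersection count $k$. The goal is to derive an inequality on $\mu$ that is violated by the definition $\mu = \frac{\alpha - \beta}{\delta + \alpha}$.

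First I would quantify how much each edge of $F \setminus H$ must overlap with $M_W$. Every $e \in F \subseteq \pazocal{E}_{\alpha T}$ satisfies $\text{val}(e_W) \geq \alpha T$ by definition of an $\alpha$-edge, while $e \notin H$ means $\text{val}(e_W \setminus M_W) < \beta T$; subtracting gives the pointwise lower bound $\text{val}(e_W \cap M_W) > (\alpha - \beta) T$. Summing over $e \in F \setminus H$ yields a total lower bound strictly greater than $(1-\mu)|F|(\alpha - \beta) T$ on the resource mass that $F \setminus H$ shares with $M$.

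Next I would dominate the same sum from above by charging to those edges of $M$ that are actually hit by $F$. Because $F$ is a hypergraph matching, the sets $e_W$ for $e \in F \setminus H$ are pairwise disjoint, so
\[
    \sum_{e \in F \setminus H} \text{val}(e_W \cap M_W) \;\leq\; \sum_{m \in M,\; m \cap F \neq \emptyset} \text{val}(m_W).
\]
Minimality of each $m \in \pazocal{E}_{\beta T}$ gives $\text{val}(m_W) < (\beta + \delta) T$, so the right-hand side is at most $k(\beta + \delta)T$, which under our contradictory assumption is strictly less than $\mu|F|(\beta + \delta)T$.

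Combining the two bounds and cancelling $|F|T$ leaves $(1-\mu)(\alpha - \beta) < \mu(\beta + \delta)$, which rearranges to $\mu > \frac{\alpha - \beta}{\alpha + \delta}$, contradicting the choice of $\mu$. The only step I expect to require care is the charging argument: several distinct edges of $F \setminus H$ can meet the same $m \in M$, which is precisely why one must pay the full $\text{val}(m_W) < (\beta+\delta)T$ per intersected $m$ rather than the bare $\beta T$; this slack of $\delta$ per $\beta$-edge is exactly what forces the denominator $\alpha + \delta$ rather than $\alpha$ in the definition of $\mu$. A trivial edge case: if $F \setminus H = \emptyset$ then (i) holds vacuously, so we may assume the strict inequalities above are meaningful.
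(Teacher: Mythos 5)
Your proposal is correct and follows essentially the same route as the paper: a contradiction via double counting the value of the overlap $M_W \cap F_W$, lower-bounding it by $(\alpha-\beta)T$ per edge of $F \setminus H$ (disjointness of $F$) and upper-bounding it by $(\beta+\delta)T$ per intersected edge of $M$ (minimality of $\beta$-edges), yielding $\mu > \frac{\alpha-\beta}{\alpha+\delta}$, a contradiction. The paper's proof is exactly this computation written as a single chain of inequalities.
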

  \begin{proof}
  Let $W' = M_W \cap H_W$ be the right hand side nodes where the hypermatchings overlap and 
  suppose for the sake of contradiction that neither of the two cases occur. Then
  double counting the value of $W'$ gives
  \[ 
  \phi \cdot (\beta +\delta)\cdot T \cdot |H| > (\beta+ \delta) T \cdot \underbrace{(\#\textrm{edges in }M\textrm{ intersecting }W')}_{\phi \cdot |H| >} 
  \geq \text{val}(W')\geq \underbrace{|H \setminus H'|}_{\geq (1-\phi) \cdot |H|} \cdot (\alpha-\beta) \cdot T.
  \]
  Rearranging and simplifying, the above implies $\phi > \frac{\alpha - \beta}{\delta + \alpha}$. Thus we contradict our choice of $\phi$.
  \end{proof}

Next, we guarantee that the number of blocking edges grows geometrically.

\begin{lemma}\label{lem: exp-growth}
    At the beginning of each iteration in Algorithm \ref{fig:Algorithm}, for all $0 \leq i < \ell$, $|B_{i+1}| \geq c \cdot |B_{\leq i}|$, for 
    $c=  \frac{(\alpha-\beta) \cdot \mu}{\beta+\delta + \kappa \cdot (\alpha-\beta)}$.
\end{lemma}
\begin{proof}
We follow the proof as in \cite{AlgoForSantaClaus-AnnamalaiKalaitzisSvenssonSODA15}.

Fix a layer $i+1$, for $0 \leq i < \ell$.
    The last phase of the algorithm ensures that $|A'_{i+1}| < \kappa \cdot |B_{i+1}|$.
    Thus as least $|A_{i+1}| - \kappa \cdot |B_{i+1}|$ edges of $A_{i+1}$ are not in $A'_{i+1}$ and have at least $\alpha \cdot T - \beta \cdot  T$ nodes of $W$ in blocking edges in $B_{i+1}$.
    Simultaneously, we can upper bound 
$|(B_{i+1} \cap A_{i+1})_W |$ by $(\beta+\delta) \cdot T \cdot |B_{i+1}|$, and combining these bounds we see
    \begin{equation}
    \label{eq: exp-size}
        (\alpha \cdot T - \beta \cdot T) \cdot (|A_{i+1}| - \kappa \cdot |B_{i+1}|) \leq  (\beta+\delta) \cdot T\cdot  |B_{i+1}|.
    \end{equation}
When layer $i+1$ is first constructed, $|A_{i+1} | \geq \mu \cdot  |B_{\leq i}|.$ Further, this condition holds after all collapse phases, as layers are either removed from the tree entirely, or add edges remain in tact while blocking edges are removed. 
Subbing this lower bound on 
$|A_{i+1} | $ into Equation (\ref{eq: exp-size}),
\begin{equation*}
        (\alpha - \beta)\cdot T \cdot ( \mu \cdot |B_{\leq i}| - \kappa \cdot |B_{i+1}|)  \leq  (\beta+\delta)  \cdot T \cdot |B_{i+1}|,
    \end{equation*}
    which rearranging is
        $
        \frac{(\alpha-\beta) \cdot \mu}{\beta+\delta + \kappa \cdot (\alpha-\beta)}  \cdot |B_{\leq i}|  \leq    |B_{i+1}|.$
\end{proof}

Our last lemma will show a constant fraction of nodes
that could be swapped out of the augmenting tree come from the same layer. 
This allows us to swap out enough nodes from the same layer to make substantial progress with each iteration. 
Here, $C'$ and $\widetilde{C}$ are labeled the same as in Algorithm \ref{fig:swap}.

\begin{lemma} \label{lem:ConstantSizeSwap}
Let sets $C'$ and $\{B_i\}_{i=0}^{\ell}$ be such that $C' \subseteq (B_{\leq \ell})_X$.
Further, suppose there exists a constant $c>0$ 
such that $|C'| \geq c \cdot |B_{\leq \ell}|$
and $|B_{i+1}| \geq c \cdot |B_{\leq i}|$ for $i=0,\ldots,\ell-1$.
Then there exists a layer $0 \leq\widetilde{\ell} \leq \ell$ and constant  $\gamma = \gamma(c) > 0$,
such that $\widetilde{C} = C' \cap (B_{\widetilde{\ell}})_X$ has size
  $|\widetilde{C}| \geq \gamma  \cdot |C'|$.
\end{lemma}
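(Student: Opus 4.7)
The plan is to exploit the geometric growth of the cumulative layer sizes to reduce the pigeonhole from all $\ell+1$ layers (which would only yield a $1/(\ell+1)$ fraction) down to a number of layers that depends only on $c$.

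First I would let $a_j := |C' \cap (B_j)_X|$ for $j = 0, 1, \ldots, \ell$. Since $C' \subseteq (B_{\leq \ell})_X$ and the layers $B_0,\ldots,B_\ell$ have pairwise disjoint $X$-endpoints (they are carved from the matching $M$), one has $\sum_{j=0}^{\ell} a_j = |C'|$. The hypothesis $|B_{i+1}| \geq c \cdot |B_{\leq i}|$ immediately gives $|B_{\leq i+1}| \geq (1+c) \cdot |B_{\leq i}|$, and so by induction $|B_{\leq j}| \leq (1+c)^{j-\ell} \cdot |B_{\leq \ell}|$ for all $j \leq \ell$; combined with $|B_{\leq \ell}| \leq |C'|/c$ this yields
\[
    |B_{\leq j}| \;\leq\; \frac{1}{c \, (1+c)^{\ell-j}} \cdot |C'| \qquad \text{for all } j \leq \ell.
\]

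Next I would choose the integer $K := \lceil \log_{1+c}(2/c) \rceil$, a quantity depending only on $c$, so that $1/(c\,(1+c)^{K}) \leq 1/2$. If $\ell < K$, I simply apply plain pigeonhole over the at most $K$ layers and obtain some $\tilde{\ell}$ with $a_{\tilde{\ell}} \geq |C'|/K$. Otherwise, the ``old'' layers contribute at most
\[
    \sum_{j=0}^{\ell-K} a_j \;\leq\; |B_{\leq \ell-K}| \;\leq\; |C'|/2,
\]
so the last $K$ layers $j \in \{\ell-K+1, \ldots, \ell\}$ account for mass at least $|C'|/2$, and pigeonhole over these $K$ layers produces some $\tilde{\ell}$ with $a_{\tilde{\ell}} \geq |C'|/(2K)$.

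In either case the desired layer exists, so setting $\gamma(c) := 1/(2K)$ concludes the argument. The whole proof rests on the geometric growth estimate for $|B_{\leq j}|$; once that is established, the pigeonhole step is routine. I do not anticipate any real obstacle beyond bookkeeping the small-$\ell$ boundary case.
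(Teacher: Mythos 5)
Your proof is correct and follows essentially the same route as the paper's: use the geometric growth $|B_{\leq j+1}| \geq (1+c)|B_{\leq j}|$ together with $|C'| \geq c|B_{\leq \ell}|$ to show the oldest layers carry at most half the mass of $C'$, then pigeonhole over the last $O_c(1)$ layers. The only difference is cosmetic bookkeeping — you make the $\ell < K$ boundary case explicit, which the paper leaves implicit.
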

\begin{proof}
By  Lemma \ref{lem: exp-growth},  $|B_{\leq \ell}|$ can be written in terms of lower indexed sets as
\[
  |B_{\leq \ell}| \geq (1+c)^k \cdot |B_{\leq \ell-k}|,
\]
for $k=0,\ldots, \ell$, by taking $c=\frac{(\alpha-\beta) \cdot \mu}{\beta+\delta + \kappa \cdot (\alpha-\beta)}  $. Therefore, $|C'| \geq c (1+c)^k \cdot |B_{\leq \ell-k}|$. Since $c$ is a constant, take $k$ large enough so $c (1+c)^k \geq 2$, namely $k \geq \frac{\log(\frac{2}{c})}{\log(1+c)}$. Then the collection $(B_{\ell-i})_X$, for $i=0,\ldots, k$, contains at least half of $C'$, so one of them contains at least  $\gamma = \frac{1}{2(k+1)}$ of $C'$.
\end{proof}

\begin{figure}
\begin{center}
\begin{subfigure}{}
\psset{xunit=0.7cm,yunit=0.5cm}
\begin{pspicture}(-1,-2)(12,5)
%\psset{linecolor=lightgray}
%\psgrid[gridcolor=lightgray,subgridcolor=white](-2,-3)(12,5)
\rput[c](0,3){$X$}
\rput[c](0,0){$W$}
%\psellipse(0,0)(1,1)
\psellipse[linewidth=0.75pt,fillstyle=solid,fillcolor=lightgray](4.5,3)(3.5,1)\rput[c](5,3){$S$}% S
\psellipse[linewidth=0.75pt,fillstyle=solid,fillcolor=gray](3,3)(1.5,0.7)\rput[c](2.25,3){$C$}% C
\psellipse[linewidth=0.75pt,fillstyle=solid,fillcolor=lightgray](9.25,3)(1.0,0.7)\rput[c](9.25,4.25){$D$}% D
\psellipse[linewidth=0.75pt,fillstyle=solid,fillcolor=lightgray](3,0)(2,0.7)\rput[c](3,-1.5){$(A_{\leq \ell})_W \cup (B_{\leq \ell})_W$}% A_W + B_
% EDGES IN B_l+1
%\pspolygon(6.5,3)(6,0)(7,0)
\pspolygon[linearc=0.05,linewidth=0.75pt,linestyle=dashed,opacity=0.2,fillcolor=gray,fillstyle=solid](6.5,3.75)(5.75,-0.25)(7.25,-0.25)\rput[c](6.5,1.5){$B_{\ell+1}$}
%\pspolygon(7.0,3)(8.5,0)(10,0)
\pspolygon[linearc=0.05,linewidth=0.75pt,linestyle=dashed,opacity=0.2,fillcolor=gray,fillstyle=solid](6.5,3.7)(8.5,-0.25)(9.85,-0.25)
% EDGES IN A_l+1
%\pspolygon[linecolor=blue](9,3)(7,0)(8.5,0)
\pspolygon[linearc=0.05,linewidth=0.75pt,linecolor=black,opacity=0.3,fillcolor=gray,fillstyle=solid](9.2,3.6)(6.5,-0.4)(8.75,-0.4)
%\pspolygon[linecolor=blue](9.5,3)(9.5,0)(11,0)
\pspolygon[linearc=0.05,linewidth=0.75pt,linecolor=black,opacity=0.3,fillcolor=gray,fillstyle=solid](9.35,3.6)(9.25,-0.4)(11.35,-0.4)\rput[l](10.5,1.5){$A_{\ell+1}$}
% NODES IN X AND W
\multido{\N=1.5+0.5}{7}{\cnode*(\N,0){2pt}{A}}
\multido{\N=5.5+0.5}{12}{\cnode*(\N,0){2pt}{A}}
\cnode*(3,3){2pt}{i0}\nput[labelsep=2pt]{0}{i0}{$i_0$}
\cnode*(6.5,3){2pt}{B1}
\cnode*(7.0,3){2pt}{B1}
\cnode*(9.0,3){2pt}{A1}
\cnode*(9.5,3){2pt}{A1}
%\psbrace[rot=90,ref=1C,nodesepB=5pt](1,-1)(5,-1){$(A_{\leq \ell})_W \cup (B_{\leq \ell})_W$}
\psbrace[rot=90,ref=1C,nodesepB=7pt,braceWidthInner=3pt,braceWidthOuter=3pt](5.75,-0.6)(7.25,-0.6){$\beta T$}
\psbrace[rot=90,ref=1C,nodesepB=7pt,braceWidthInner=3pt,braceWidthOuter=3pt](9.25,-0.6)(11.25,-0.6){$\alpha T$}
\end{pspicture}
\caption{Case 1 of the algorithm, where a set $A_{\ell+1} \subseteq \pazocal{E}_{\alpha T}$ of hyperedges is found that intersects many new edges $B_{\ell+1} \subseteq (M \setminus B_{\leq \ell})$. In particular $|B_{\ell+1}| \geq \Omega_{\varepsilon}(|C|)$. Note that $D$ might contain nodes from $C$.\label{fig:CaseIofTheAlgorithm}}
\end{subfigure}
%\end{center}
%\begin{center}
\begin{subfigure}{}
\psset{xunit=0.7cm,yunit=0.5cm}
\begin{pspicture}(-1,-1.5)(12,5.5)
%\psset{linecolor=lightgray}
%\psgrid[gridcolor=lightgray,subgridcolor=white](-2,-3)(12,5)
\rput[c](0,3){$X$}
\rput[c](0,0){$W$}
%\psellipse(0,0)(1,1)
\psellipse[linewidth=0.75pt,fillstyle=solid,fillcolor=lightgray](4.5,3)(3.5,1)\rput[c](7.0,3){$S$}% S
\psellipse[linewidth=0.75pt,fillstyle=solid,fillcolor=gray](4.5,3)(1.85,0.7)\rput[c](5.8,3){$C$}% C
\psellipse[linewidth=0.75pt,fillstyle=solid,fillcolor=lightgray](4.75,3)(0.8,0.6)\rput[c](4.75,3){$\widetilde{C}$}% \widetilde{C}
\psellipse[linewidth=0.75pt,fillstyle=solid,fillcolor=lightgray](9.25,3)(0.8,0.6)\rput[c](9.25,4.25){$\widetilde{D}$}% D
\psellipse[linewidth=0.75pt,fillstyle=solid,fillcolor=lightgray](4.0,0)(3,0.7)\rput[c](4,-1.5){$(A_{\leq \ell})_W \cup M_W$}% A_W + B_W
% EDGES IN A_l+1)
\pspolygon[linearc=0.05,linewidth=0.75pt,linecolor=black,opacity=0.3,fillcolor=gray,fillstyle=solid](9.1,3.6)(7.6,-0.4)(9.20,-0.4)
\pspolygon[linearc=0.05,linewidth=0.75pt,linecolor=black,opacity=0.3,fillcolor=gray,fillstyle=solid](9.4,3.6)(9.30,-0.4)(10.90,-0.4)\rput[l](10.5,1.5){$\widetilde{H}$}
% NODES IN X AND W
\multido{\N=1.5+0.5}{11}{\cnode*(\N,0){2pt}{A}}
\multido{\N=8.00+0.50}{6}{\cnode*(\N,0){2pt}{A}}
\cnode*(3.6,3){2pt}{i0}\nput[labelsep=1pt]{180}{i0}{$i_0$}
\cnode*(9.0,3){2pt}{A1}
\cnode*(9.5,3){2pt}{A2}
%\psbrace[rot=90,ref=1C,nodesepB=5pt](1,-1)(5,-1){$(A_{\leq \ell})_W \cup (B_{\leq \ell})_W$
\psbrace[rot=90,ref=1C,nodesepB=7pt,braceWidthInner=3pt,braceWidthOuter=3pt](9.30,-0.6)(10.90,-0.6){$\beta T$}
\end{pspicture}
\end{subfigure}
\caption{Case 2 of the algorithm, where $\widetilde{H} \subseteq \pazocal{E}_{\beta T}$ of size $|\widetilde{H}| \geq \Omega_{\varepsilon}(|C|)$ is found so that $(i)$ $\widetilde{H}$ is disjoint on the $W$-side to the matching $M$ and the adding edges in the augmenting tree, $(ii)$ $\widetilde{H}$ covers a set $\widetilde{D}$ with $S \setminus \widetilde{C} \cup \widetilde{D} \in \pazocal{I}$, and $(iii)$ $\widetilde{C}$ is from one layer of the augmenting tree. Here $\widetilde{D}$ and $\widetilde{C}$ do not have to be disjoint.\label{fig:CaseIIofTheAlgorithm}}
\end{center}
\end{figure}
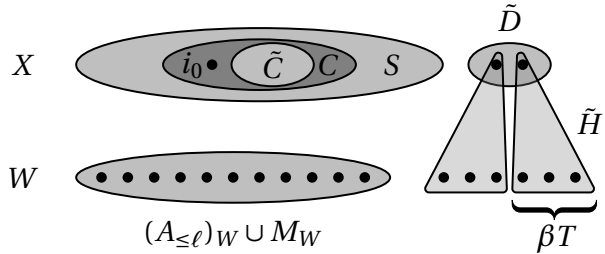

\subsection{Termination and runtime}

  As seen in Lemma~\ref{lem:ConstantSizeSwap}, 
$
|X| \geq |B_{\leq \ell}|\geq \left(1+c   \right)^{\ell} \cdot |B_0|,
$
and solving for $\ell$ shows $\frac{\log(|X|)}{\log \left(1+c   \right)}  \geq \ell$.
Thus the total number of layers at any step in the algorithm is $O(\log|X|)$. 
After each collapse of the layers, the matching $M$, and possibly the independent set $S$, are updated. 
However, the fixed exposed node $i_0$ will remain in $S$ until the very last iteration, in which the algorithm finds
an edge $e_1$ to augment the matching. 
Before we begin discussing the proof guaranteeing our algorithm terminates, we need a lemma to compare the number of blocking edges after a layer is collapsed to the number of blocking edges at the beginning of the iteration. 

\begin{lemma}\label{lem:ExtraSteps}
  Let $\widetilde{\ell}$ be the index of the collapsed layer and let $B'$ be the updated blocking edges after a collapse step. Then,
  $|B'_{\leq \widetilde{\ell}}| \leq |B_{\leq \widetilde{\ell}}| \cdot \max \left \{1- \gamma \cdot \phi \cdot \mu , 1-\gamma \cdot \kappa \cdot c \right  \}$.
\end{lemma}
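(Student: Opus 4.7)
The plan is to show the stronger claim $|\tilde{M}| \geq \frac{\varepsilon^2}{4}\gamma \cdot |B_{\leq \tilde{\ell}}|$, from which the lemma follows immediately: the collapse step sets $B \leftarrow B_{\leq \tilde{\ell}} \setminus \tilde{M}$, and because $\tilde{C} \subseteq (B_{\tilde{\ell}})_X$ while every edge of $M$ has exactly one $X$-endpoint we actually have $\tilde{M} \subseteq B_{\tilde{\ell}} \subseteq B_{\leq \tilde{\ell}}$, so $|B'_{\leq \tilde{\ell}}| = |B_{\leq \tilde{\ell}}| - |\tilde{M}|$.

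For the key estimate, I would chain four observations. First, since $M \subseteq \pazocal{E}_{\beta T}$ is a hypermatching in which each hyperedge covers one element of $X$, every node of $\tilde{C}$ is covered by a distinct edge of $\tilde{M}$, hence $|\tilde{M}| = |\tilde{C}|$. Second, Lemma~\ref{lem:ConstantSizeSwap} (as invoked in Case 2 of the algorithm) gives $|\tilde{C}| \geq \gamma \cdot |C'|$. Third, tracing the Case 2 definitions, $|C'| = |D'| = |H'|$; Lemma~\ref{lem:SizeOfOverlapOfNewEdgesWithM} (the very fact that places us into the else-branch) yields $|H'| \geq \frac{\varepsilon}{4} |H|$; and the Expansion Lemma~\ref{lem:GreedyMatchingExpansionLemma}, together with $|H| = |D|$ (each $\alpha$-hyperedge has one $X$-endpoint), gives $|H| \geq \varepsilon |C|$. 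Combined, $|C'| \geq \frac{\varepsilon^2}{4} |C|$. Fourth, since $C = B_X \cup \{i_0\}$ and $M$ matches $B_X$ to $B$ one-to-one, $|C| \geq |B_{\leq \ell}| \geq |B_{\leq \tilde{\ell}}|$. Multiplying yields $|\tilde{M}| \geq \gamma \cdot \frac{\varepsilon^2}{4} \cdot |B_{\leq \tilde{\ell}}|$, as required.

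I do not anticipate a real obstacle; the proof is a short chain of already-established bounds. The only care needed is bookkeeping: verifying (a) the matching identity $|\tilde{M}| = |\tilde{C}|$ and the inclusion $\tilde{M} \subseteq B_{\leq \tilde{\ell}}$, (b) the set equalities $|C'| = |D'| = |H'|$ from the algorithm's assignments, and (c) that the constants $c$ used when invoking Lemma~\ref{lem:ConstantSizeSwap} are consistent with the bound $|B_{i+1}| \geq \frac{\varepsilon^2}{4}|B_{\leq i}|$ established in Case 1, so that $\gamma$ in the conclusion indeed matches the $\gamma$ in Lemma~\ref{lem:ConstantSizeSwap}. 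Once these identifications are made, the inequality chain is essentially mechanical.
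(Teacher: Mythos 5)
Your proof is correct and follows essentially the same route as the paper's: both reduce the claim to showing $|\tilde{M}| = |F| \geq \frac{\varepsilon^2}{4}\gamma\, |B_{\leq \tilde{\ell}}|$ and then subtract, using that only layer $\tilde{\ell}$ is affected by the removal. The paper's version states the key lower bound on $|F|$ somewhat tersely (citing Lemmas~\ref{lem:GreedyMatchingExpansionLemma} and~\ref{lem:SizeOfOverlapOfNewEdgesWithM} without tracing constants); you spell out the chain $|\tilde{M}| = |\tilde{C}| \geq \gamma|C'| = \gamma|H'| \geq \gamma\frac{\varepsilon}{4}|H| \geq \gamma\frac{\varepsilon^2}{4}|C| \geq \gamma\frac{\varepsilon^2}{4}|B_{\leq\tilde{\ell}}|$ explicitly, which is a helpful clarification but not a different argument.
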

\begin{proof}
 Recall $B'_{\widetilde{\ell}} = B_{\widetilde{\ell}} \setminus \widetilde{M}$ for $\widetilde{M}$ the edges of $M$ covering $\widetilde{C}$. Further, the blocking edges in layers indexed less than $\widetilde{\ell}$ are not effected in the iteration. Hence $ |B'_{\leq \widetilde{\ell}}| = |B'_{\leq \widetilde{\ell}-1}| + |B'_{ \widetilde{\ell}}|=|B_{\leq \widetilde{\ell}-1}| + |B'_{ \widetilde{\ell}}| $.
 
Then we examine a collapsed layer by itself. 

If layer $\widetilde{\ell}$ is collapsed within the \texttt{if/else} statement, we use
Lemmas \ref{lem:ConstantSizeSwap}, \ref{lem:SizeOfOverlapOfNewEdgesWithM}, and \ref{lem:GreedyMatchingExpansionLemma} to see that    
 \[|\widetilde{M}| \geq  \gamma  \cdot |C'|\geq \gamma  \cdot \phi \cdot |H|  \geq |C'|\geq \gamma  \cdot \phi \cdot \mu \cdot |C| \geq \gamma  \cdot \phi \cdot \mu \cdot
 |B_{\leq \widetilde{\ell}}|,\] 
where we recall $C'$, $H$, and $C$ are as in the \texttt{if/else} statement in Algorithm \ref{fig:Algorithm}.
Rearranging, $|B'_{ \widetilde{\ell}}| = |B_{ \widetilde{\ell}}|-|\widetilde{M}| \leq |B_{ \widetilde{\ell}}| -   \gamma \cdot \phi \cdot \mu |B_{ \leq \widetilde{\ell}}|.$
 Substituting back into $|B'_{\leq \widetilde{\ell}}|$,
\begin{eqnarray*} 
  |B'_{\leq \widetilde{\ell}}| &\leq& |B_{\leq \widetilde{\ell}-1}| + |B_{ \widetilde{\ell}}| -  \gamma \cdot \phi \cdot \mu \cdot  |B_{ \leq \widetilde{\ell}}| \\
  &=& |B_{\leq \widetilde{\ell}}| - \gamma \cdot \phi \cdot \mu \cdot |B_{ \leq \widetilde{\ell}}| =|B_{ \leq \widetilde{\ell}}| 
  \cdot \big (1-\gamma \cdot \phi \cdot \mu \big).
\end{eqnarray*}

Otherwise,  layer $\widetilde{\ell}$ is collapsed because there is some $\ell^* > \widetilde{\ell}$ where 
$|A_{\ell^*}'|$ is large compared to $|B_{\ell^*}|$ (see the last \texttt{while} loop in Algorithm \ref{fig:Algorithm}). We see that $|\widetilde{M}| \geq \gamma \cdot \kappa \cdot |B_{\ell^*}|\geq \kappa \cdot \gamma \cdot c \cdot |B_{ \leq \widetilde{\ell}}|$, where in the first inequality we use Lemma \ref{lem:ConstantSizeSwap} and in
the last we use Lemma \ref{lem: exp-growth} with the fact that $\ell^* > \widetilde{\ell}$.
So in total
\[
  |B'_{\leq \widetilde{\ell}}| \leq |B_{\leq \widetilde{\ell}}| \cdot  \left ( 1-\gamma \cdot \kappa \cdot c \right ).
\]
\end{proof}

To prove the algorithm terminates in polynomial time, we consider a signature vector $s = (s_0,s_1,\ldots, s_{\ell}, \infty)$, 
where $s_j = \lfloor \log_{b}|B_{\leq j}| \rfloor $ for $b = 1/\max \left \{1- \gamma \cdot \phi \cdot \mu ,  1-\gamma \cdot \kappa \cdot c \right \}$. 
The signature vector and proof that the algorithm terminates is inspired by \cite{AlgoForSantaClaus-AnnamalaiKalaitzisSvenssonSODA15}.

\begin{lemma}\label{lem:LexOrderDecreases}
The  vector $s$ decreases lexicographically after each iterative loop in Algorithm \ref{fig:Algorithm}. 
\end{lemma}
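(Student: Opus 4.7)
The plan is to treat the two branches of the \textbf{while} loop in Figure~\ref{fig:Algorithm} separately and verify that in each case the signature $s$ strictly decreases in the lexicographic order, with the convention that the trailing $\infty$ is larger than any integer and positions beyond the explicit tail are also interpreted as $\infty$.

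First I would handle the \textbf{if} branch, where a fresh layer is appended. Here the existing layers $B_0,\dots,B_\ell$ are not modified at all, so the entries $s_0,\dots,s_\ell$ of the signature are unchanged. The only effect of the step is that the trailing $\infty$ is replaced by a new finite entry $s_{\ell+1} = \lfloor \log_c |B_{\le\ell+1}| \rfloor$. Since any integer is strictly less than $\infty$, this produces an immediate lexicographic decrease at position $\ell+1$, and the case is done.

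Next I would handle the \textbf{else} branch, where layers $\tilde\ell+1,\dots,\ell$ are discarded and $B_{\tilde\ell}$ shrinks to $B'_{\tilde\ell}$. The layers indexed below $\tilde\ell$ are untouched, so $s_0,\dots,s_{\tilde\ell-1}$ are preserved. At position $\tilde\ell$, I would invoke Lemma~\ref{lem:ExtraSteps}, giving
\[
    |B'_{\leq\tilde\ell}| \;\le\; |B_{\leq\tilde\ell}| \cdot \Big(1-\tfrac{\varepsilon^2}{4}\gamma\Big) \;=\; \frac{|B_{\leq\tilde\ell}|}{c}.
\]
Because $\gamma, \varepsilon > 0$ and $\frac{\varepsilon^2}{4}\gamma < 1$, the base satisfies $c>1$, so $\log_c$ is monotonically increasing. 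Taking $\log_c$ of both sides gives $\log_c |B'_{\leq\tilde\ell}| \le \log_c|B_{\leq\tilde\ell}| - 1$, and combining this with the floor identity $\lfloor x-1\rfloor = \lfloor x\rfloor-1$ yields $s'_{\tilde\ell} \le s_{\tilde\ell}-1$. This is a strict decrease at position $\tilde\ell$, which is the first position of disagreement between the new and old signatures, so the lexicographic comparison is settled there.

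The substantive content is already packaged into Lemma~\ref{lem:ExtraSteps}, so the proof reduces to bookkeeping once the signature convention is fixed. The only place requiring a little care is comparing signatures of different lengths in the collapse branch, since the new signature ends at position $\tilde\ell$ while the old one extends further; padding unused tail positions with $\infty$ makes the order total and ensures the comparison is decided at position $\tilde\ell$. I do not expect any genuine obstacle beyond correctly applying Lemma~\ref{lem:ExtraSteps}.
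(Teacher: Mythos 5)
Your proof is correct and follows essentially the same route as the paper: the same case split on the two branches of the loop, with the append branch decreasing the signature because a finite entry replaces the trailing $\infty$, and the collapse branch handled by Lemma~\ref{lem:ExtraSteps} together with the observation that $1-\frac{\varepsilon^2}{4}\gamma = \frac{1}{c}$, so the floored $\log_c$ drops by at least one at position $\tilde{\ell}$. Your explicit remark about padding shorter signatures with $\infty$ just makes precise a convention the paper uses implicitly.
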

\begin{proof}
Let $s = (s_0,\ldots,s_{\ell},\infty)$ be a signature vector at the beginning of a step in the algorithm, and let $s'$ be the result of $s$ through one iteration of the algorithm. For $\ell+1$ denoting the newest built layer in the algorithm and $H$ the newest set of hyperedges, if $H$ intersects at least $ \phi \cdot |H|$ many edges of $M$, then another layer in the augmenting tree is built and no layer is collapsed. We have $s' = (s_0,\ldots,s_{\ell},s_{\ell+1},\infty)$ is lexicographically smaller than $s$.

Otherwise, at least one layer $0 \leq \widetilde{\ell} \leq \ell$ is collapsed. We fix $\widetilde{\ell}$ to be the last layer collapsed in each iteration of the outer \texttt{while} loop.  All finite coordinates above $s_{\widetilde{\ell}}$ are deleted from the signature vector, and all coordinates before $s_{\widetilde{\ell}}$ are unaffected. So it suffices to check that $s'_{\widetilde{\ell}} < s_{\widetilde{\ell}}$. Let $B'$ be the updated blocking edges after a collapse step. As $B_{\widetilde{\ell}}$ is the only set of blocking edges in $B_{\leq \widetilde{\ell}}$ affected by the collapse, by Lemma~\ref{lem:ExtraSteps} we have  $|B'_{\leq \widetilde{\ell}}| \leq |B_{\leq \widetilde{\ell}}| \cdot \max \left \{1- \gamma \cdot \phi \cdot \mu ,  1-\gamma \cdot \kappa \cdot c \right  \}$. Taking a $\log$ we compare the coordinates
\[
s'_{\widetilde{\ell}} = \left \lfloor \log_b \left ( \left |B'_{\leq \widetilde{\ell}} \right | \right ) \right \rfloor 
\leq 
\left \lfloor \log_b \left ( \left | B_{\leq \widetilde{\ell}} \right | \right ) \right \rfloor -1 = s_{\widetilde{\ell}}-1.
\]
\end{proof}
Choose the infinite coordinate to be some integer larger than $\log |X|$.  Since for every layer $\ell$, we have $|B_{\leq \ell}| \leq |X|$, then every coordinate of the signature vector is upper bounded by $U = O(\log |X|)$. Recall the number of layers, and thus the number of coordinates in the signature vector, is also upper bounded by $U$. Together, these imply that the sum of the coordinates of the signature vector is at most $U^2$. 

As the signature vector has non-decreasing order, each signature vector corresponds to a partition of an integer $z \leq U^2$. On the other hand, every partition of some $z \leq U^2$ has a corresponding signature vector. Thus we apply a result of Hardy and Ramanujan to find the total number of signature vectors is $\sum_{k \leq U^2} e^{O(\sqrt{k})} = |X|^{O(1)}$. Since each iteration of the algorithm can be done in polynomial time and the signature vector decreases lexicographically after each iteration, the algorithm terminates after a total time of $n^{O_{\varepsilon}(1)}$.

\section{Application to Santa Claus\label{sec:SantaClausApplication}}

In this section, we show a polynomial time $(6+\varepsilon)$-approximation algorithm
for the Santa Claus problem. Recall that 
for a given set of children $M$, and a set of presents $J$, the Santa Claus problem asks how Santa should distribute presents to children in order to maximize the minimum happiness of any child\footnote{We assume Santa to be an equitable man---not one influenced by bribery, social status, etc.}.
Here, present $j$ is only wanted by some subset of children that we denote by $A_j \subseteq M$, and present $j$ has value $p_{j}$ to child $i \in A_j$. The happiness of child $i$ is the sum of all $p_{j}$ for presents $j$ assigned to child $i$. 
We assume w.l.o.g. to know the integral objective function value $T$ of the optimum solution,
otherwise $T$ can be found by binary search.

We partition gifts into two sets: \emph{large} gifts $J_L := \{ j \in J \mid p_j > \delta_2 \cdot  T\}$
and \emph{small} gifts $J_S := \{ j \in J \mid p_j \leq \delta_1 \cdot T\}$,
for parameters $0 < \delta_1 \leq \delta_2 < 1$ such that
all gifts have values in $[0,\delta_1 \cdot T] \cup (\delta_2 \cdot T,T]$. 
Let $P(T,\delta_1,\delta_2)$ be the set of vectors $z \in \mathbb{R}_{\geq 0}^{J \times M}$ satisfying
\begin{eqnarray*}\label{eq:CompactLPforSC}
 \sum\limits_{j \in J_S: i \in A_j} p_j z_{ij} &\geq& T \cdot \Big( 1-\sum_{j \in J_L: i \in A_j} z_{ij}\Big)  \qquad \forall i \in M \\
  \sum\limits_{i \in A_j}z_{ij} &\leq& 1 \hspace{4.0cm} \forall j \in J \\
  z_{ij} &\leq& 1-\sum_{j' \in J_L: i \in A_{j'}} z_{ij'} \hspace{1.5cm} \forall j \in J_S \; \forall i \in A_j
\end{eqnarray*}

If $n = |J| + |M|$, then this LP has $O(n^2)$ many variables and $O(n^2)$
many constraints. To see that this is indeed a relaxation, take any feasible assignment $\sigma : J \to M$ with $\sum_{j \in \sigma^{-1}(i)} p_j \geq T$ for all $i \in M$. 
Now let $\sigma : J \to M \cup \{ \emptyset \}$ be a modified 
assignment where we set $\sigma(j) = \emptyset$ for gifts that we decide to drop. For each child $i \in M$ that receives at least one large gift 
we drop all small gifts and all but one large gift. 
Then a feasible solution $z \in P(T,\delta_1,\delta_2)$ is obtained by letting  
\[ 
  z_{ij} := \begin{cases} 
 1 & \textrm{if }\sigma(j) = i \\
%1 & \textrm{if }\sigma(j) = i\textrm{ and }p_j \geq \delta \dot T  \\
%1 & \textrm{if }\sigma(j) = i\textrm{ and }p_j < \delta T\textrm{ and }\not\exists j'\textrm{ with } (\sigma(j')=i\textrm{ and }p_j' \geq \delta T)\\
0 & \textrm{otherwise}.
\end{cases}
\]

We will show that given a feasible solution $z \in P(T, \delta_1,\delta_2)$, there exists a feasible solution $(x^*,y^*)$ to $Q(T)$. To do this, we will exploit two underlying matroids in the Santa Claus problem, allowing us to apply Theorem~\ref{thm:MainMatroidAlgorithm}. Let 
\[
  \pazocal{I} = \left\{M_L \subseteq M |\; \exists \text{ left-perfect matching between } M_L\textrm{ and }J_L\textrm{ using edges }(i,j): i \in A_j\right\}, 
\]
be a family of independent sets. Then  $\pazocal{M} = (M,\pazocal{I})$ constitutes a \emph{matchable set matroid}.

We denote the \emph{co-matroid} of $\pazocal{M}$ by $\pazocal{M}^* = (M, \pazocal{I}^*)$. Recall that the  independent sets 
of the co-matroid are given by
\[
  \pazocal{I}^* = \left\{M_S \subseteq M | \; \exists  M_L \in \pazocal{B}(\pazocal{M}):  M_S \cap M_L = \emptyset \right\}.
\]

We can define a vector $x \in \setR^M$ with 
$ 
x_i = \sum_{j \in J_L: i \in A_j} z_{ij}
$
that lies in the matroid polytope of $\pazocal{M}$. This fact follows easily from the integrality of the fractional matching polytope in bipartite graphs. It is instructive to think of $x_i$ as the decision variable telling whether child $i \in M$ should receive a large present.

Unfortunately, $x$ does not have to lie in the base polytope --- in fact the sum $\sum_{i \in M} x_i$ might not even be integral. 
However, there always exists a vector $x'$ in the base polytope that covers
every child just as well with large presents as $x$ does. 
This observation can be stated for general matroids: 
\begin{lemma}\label{lem: RoundToBasePolytope}
Let $\pazocal{M} = (X,\pazocal{I})$ be any matroid and let $x$ be a point in its matroid polytope. Then in polynomial time one can find a point $x'$ in the base polytope so that $x' \geq x$ coordinate-wise. 
\end{lemma}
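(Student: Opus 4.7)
The plan is to first prove existence of $x'$ via a Caratheodory-style decomposition, and then describe a polynomial-time iterative construction. For existence, since $x$ lies in the matroid polytope $\conv\{\chi(I) : I \in \pazocal{I}\}$, Caratheodory's theorem yields a decomposition $x = \sum_{i=1}^k \lambda_i \chi(I_i)$ with $I_1, \dots, I_k \in \pazocal{I}$, $\lambda_i > 0$, and $\sum_i \lambda_i = 1$. By the exchange property (matroid axiom (iii)), each $I_i$ extends to a basis $B_i \supseteq I_i$, so $\chi(B_i) \geq \chi(I_i)$ coordinate-wise. Then $x' := \sum_i \lambda_i \chi(B_i)$ is a convex combination of characteristic vectors of bases, hence lies in $P_{\pazocal{B}(\pazocal{M})}$ and satisfies $x' \geq x$.

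For the polynomial-time construction, let $r$ denote the rank function of $\pazocal{M}$ and call a set $S \subseteq X$ \emph{tight} with respect to a vector $y$ if $y(S) = r(S)$. The function $S \mapsto r(S) - y(S)$ is submodular and non-negative for any $y$ in the matroid polytope, so the family of tight sets is closed under union; let $T(y)$ denote its unique maximum element. Initialize $y := x$ and iterate: if $T(y) = X$, return $x' := y$; otherwise, pick any $e \in X \setminus T(y)$, compute $\epsilon := \min_{S \ni e}(r(S) - y(S)) > 0$ via submodular function minimization, and set $y_e \leftarrow y_e + \epsilon$. Both $T(y)$ and $\epsilon$ can be found in polynomial time using the matroid's rank oracle, and the update preserves the invariant that $y$ remains in the matroid polytope.

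The main obstacle is bounding the number of iterations. Let $T$ and $T'$ denote the maximal tight sets before and after an update. The set $T$ remains tight (since $e \notin T$, the increase to $y_e$ does not change $y(T)$), and any minimizer $S^* \ni e$ achieves $y(S^*) + \epsilon = r(S^*)$ and so becomes tight. Hence $T' \supseteq T \cup S^* \ni e$, and since $e \notin T$ we get $T' \supsetneq T$. Consequently $|T(y)|$ strictly increases each iteration, capping the iteration count at $|X|$. The output $x'$ satisfies $x' \geq x$ because coordinates are only ever increased, and $x' \in P_{\pazocal{B}(\pazocal{M})}$ because at termination $x'(X) = r(X)$ while $x'$ stays in the matroid polytope.
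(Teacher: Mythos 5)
Your proof is correct and, at its algorithmic core, takes the same approach as the paper: repeatedly increase a single coordinate by the maximum feasible amount until the vector lands in the base polytope. The paper states this only as a one-line observation (``as long as $x$ is not in the base polytope, there is a coordinate $i$ and $\mu>0$ with $x+\mu e_i \in P_{\pazocal{M}}$'') and leaves the polynomial iteration bound implicit; you fill in exactly the missing pieces --- identifying the maximal tight set $T(y)$ via submodularity of $S \mapsto r(S)-y(S)$, showing $\epsilon>0$ because a tight set through $e$ would force $e \in T(y)$, and arguing $|T(y)|$ strictly grows each step so the process halts within $|X|$ iterations. The preliminary Carath\'eodory-plus-basis-extension existence argument is a correct extra that the paper does not include, though it is subsumed by the algorithmic argument.
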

In fact the algorithm behind this claim is rather trivial: as long as $x \in P_{\pazocal{M}}$ is not in the base polytope, there is always a coordinate $i$ and a $\mu>0$ so that $x+\mu e_i \in P_{\pazocal{M}}$.

With the new vector $x' \in P_{\pazocal{B}(\pazocal{M})}$ at hand, we can redefine the $z$-assignments by letting
\[ 
  z_{ij}' = 
\begin{cases}
z_{ij} & \qquad x_i = 1\\ 
\frac{1-x_i'}{1-x_i} \cdot z_{ij} & \qquad x_i \neq 1.
\end{cases} 
\]
for $j \in J_S$; the new values $z_{ij}'$ for $j \in J_L$ can be obtained from the fractional matching
that corresponds to $x_i'$. Note that $0 \leq z_{ij}' \leq z_{ij}$ for $j \in J_S$. The reader should be convinced that
still $z' \in P(T,\delta_1,\delta_2)$, just that the corresponding vector $x'$ now lies in $P_{\pazocal{B}(\pazocal{M})}$\footnote{There is an alternative proof without the need to replace $x$ by $x'$. Add the constraint $\sum_{j \in J_L, i \in A_j} z_{ij} = \textrm{rank}(\pazocal{M})$
to $P(T,\delta_1, \delta_2)$. There is always a feasible integral solution satisfying this constraint. Then for any fractional solution $z \in P(T,\delta_1,\delta_2)$, the corresponding vector $x$ will immediately lie in the base polytope.}.

It is well known in matroid theory that  
the complementary vector $x^* := \bm{1} - x'$ lies in $P_{\pazocal{B}(\pazocal{M}^*)}$. 
Again, it is instructive to think of $x^*_i$ as the decision variable whether 
child $i$ has to be satisfied with small gifts.
Finally, the assignments $y^*$ are simply the restriction of $z'$ on the 
coordinates $(i,j) \in M \times J_S$.
The obtained pair $(x^*,y^*)$ lies in $Q(T)$, where the matroid in the definition 
of $Q(T)$ is $\pazocal{M}^*$.

As $Q(T) \neq \emptyset$, we can apply Theorem~\ref{thm:MainMatroidAlgorithm} 
which results in a subset $M_S \in \pazocal{B}(\pazocal{M}^*) $
 of the children and an assignment $\sigma : J_S \to M_S$, 
where each child in $M_S$ receives happiness at least 
$ \big(\frac15 - \frac{\delta_1}{5} - \varepsilon \big) \cdot T$ from the assignment of small gifts.
Implicitly due to the choice of the matroid $\pazocal{M}^*$, 
we know that the remaining children $M \setminus M_S = M_L$ can all receive one large gift
and this assignment can be computed in polynomial time using a matching algorithm. 
Overall, each child receives either one large present of value
at least $\delta_2 \cdot T$ or small presents of total value at least $(\frac{1}{5} - \frac{\delta_1}{5}-\varepsilon ) \cdot T$. 
Therefore each child receives value at least
\begin{equation}\label{eq: balanceSC}
  \min\Big\{ \Big(\frac{1}{5}-\frac{\delta_1}{5}-\varepsilon\Big) \cdot T, \delta_2 \cdot T \Big\} \geq \Big(\frac{1}{6}-\varepsilon\Big) \cdot T
\end{equation}
for the choice of $\delta_2 = \delta_1=\frac{1}{6}$. 
In some instances of Santa Claus, we can do better. 
Set $\delta_1$ so that $\delta_1 \cdot T$ is the largest gift value
that is at most $\frac16 \cdot T$, and set $\delta_2$ so that $\delta_2 \cdot T$ 
is the smallest gift value that is at least $\frac16 \cdot T$.
Then the algorithm guarantees that each child receives value at least 
as in the left hand side of Equation~\ref{eq: balanceSC}. 
When $\delta_1$ and $\delta_2$ are bounded away from $1/6$,
then the approximation improves. 
For example, when $\delta_2 \geq 1/5$ and $\delta_1 T$ is close to 0,
such as in the case where all gifts have value either $T$ or 1,
we approach a $(5+\varepsilon)$-approximation.

\section{Acknowledgements}
 This work originally claimed a $(4+\varepsilon)$-approximation factor. We are indebted to Stephen Arndt for finding a mistake in that analysis. The fix was actually quite easy, but it did worsen the approximation factor from $4+\varepsilon$ to $6+\varepsilon$. Our work was originally contemporary with the $(6+\varepsilon)$-approximation of Cheng and Mao \cite{ChengM18}, though they later improved those results to a $(4+\varepsilon)$-approximation \cite{CM19}.  We are also grateful to Hannaneh Akrami and Siyue Liu for further edits on a later version of this work.

We give a bit more detail on the error identified by Arndt.
An important property that we use for the runtime analysis of Algorithm \ref{fig:Algorithm} is that the number of  blocking edges in layer $\ell+1$ is large compared to the number blocking edges in the first $\ell$ layers, i.e., $|B_{\ell+1}| \geq c \cdot |B_{\leq \ell}|$ for $0<c<1$ a constant that is a function of $\varepsilon$. 
In our previous version of the algorithm, while the condition that $|B_{\ell+1}| \geq c \cdot |B_{\leq \ell}|$ was true when layer $\ell+1$ was first constructed, the invariant did not necessarily hold after some layers are collapsed. 
The way we overcame this issue here is by checking at the end of each iteration whether there is any layer containing a substantial number of add edges with sufficient available resources. Those add edges can be swapped into the matching. In order for this step to work, we were forced to make the add edges roughly twice as large as the the blocking edges, thus resulting in the worse approximation factor.  

The main contribution of this work is the addition of the matroid structure to an augmenting tree algorithm, which serves as a much cleaner framework to swap edges in and out of the tree.

\bibliographystyle{alpha}
\bibliography{santaclaus2}

\end{document}